\title{%
	Nearly-Optimal Mergesorts: \protect\\ 
	Fast, Practical Sorting Methods That \protect\\ Optimally Adapt to Existing Runs%
}
\titlerunning{Nearly-Optimal Mergesort}
\author{J.\ Ian Munro}{University of Waterloo, Canada}{imunro@uwaterloo.ca}{https://orcid.org/0000-0002-7165-7988}{}
\author{Sebastian Wild}{University of Waterloo, Canada}{wild@uwaterloo.ca}{https://orcid.org/0000-0002-6061-9177}{}
\authorrunning{J.\,I.\ Munro and S. Wild}
\subjclass{\ccsdesc[500]{Theory of computation~Sorting and searching}}
\keywords{adaptive sorting, nearly-optimal binary search trees, Timsort}
\begin{document}

\maketitle

\begin{abstract}
We present two stable mergesort variants, ``peeksort'' and ``powersort'', 
that exploit existing runs and find nearly-optimal merging orders with 
negligible overhead.
Previous methods either require substantial effort for determining the merging order
(Takaoka 2009\ifarxiv{~\cite{Takaoka2009}}{}; Barbay \& Navarro 2013\ifarxiv{~\cite{BarbayNavarro2013}}{}) 
or do not have an optimal worst-case guarantee 
(Peters 2002\ifarxiv{~\cite{Peters2002}}{}; Auger, Nicaud \& Pivoteau 2015\ifarxiv{~\cite{AugerNicaudPivoteau2015}}{}; 
Buss \& Knop 2018\ifarxiv{~\cite{BussKnop2018}}{})\@.
We demonstrate that our methods are competitive in terms of running time with state-of-the-art implementations
of stable sorting methods.
\end{abstract}

\section{Introduction}

Sorting is a fundamental building block for numerous tasks and 
ubiquitous in both the theory and practice of computing.
While practical and theoretically (close-to) optimal 
comparison-based sorting methods are known,
\emph{instance-optimal sorting,} \ie, methods that \emph{adapt} to the actual input
and exploit specific structural properties if present,
is still an area of active research. We survey some recent developments in \wref{sec:related}.

Many different structural properties have been investigated in theory.
Two of them have also found wide adoption in practice, \eg, in Oracle's Java runtime library:
adapting to the presence of duplicate keys 
and using existing sorted segments, called \emph{runs}.
The former is achieved by a so-called fat-pivot partitioning variant of 
quicksort~\cite{BentleyMcIlroy1993},
which is also used in the GNU implementation of \texttt{std::sort} from the 
{C\raisebox{.25ex}{\textsmaller[2]{++}}} STL\@.
It is an \emph{unstable} sorting method, though, \ie, the relative order of 
elements with equal keys might be destroyed in the process.
It is hence used in Java solely for primitive-type arrays.

Making use of existing runs in the input is a well-known option in mergesort; 
\eg, Knuth~\cite{Knuth1998} discusses a bottom-up mergesort variant that does this.
He calls it ``natural mergesort'' and we will use this as an umbrella term for
any mergesort variant that picks up existing
runs in the input (instead of starting blindly with runs of size~$1$).
The Java library uses Timsort~\cite{Peters2002,java2009timsort} 
which is such a natural mergesort originally developed as Python's new library sort.

While fat-pivot quicksort provably adapts
to the \emph{entropy of the multiplicities} of keys~\cite{Wild2018}~--
it is optimal up to a factor of $1.088$ on average with pseudo\-median-of-9 (``ninther'') pivots%
\footnote{%
	The median of three elements is chosen as the pivot, 
	each of which is a median of three other elements.
	This is a good approximation of the median of 9 elements
	and often used as pivot selection rule in library implementations.%
}%
~-- 
Timsort is much more heuristic in nature.
It picks up existing runs and tries to perform merges in a favorable order
(\ie, avoiding merges of runs with very different lengths),
but many desirable guarantees are missing:
Although it was announced as an $O(n\log n)$ worst-case method
with its introduction in Python in 2002~\cite{Peters2002mailinglist},
a rigorous proof of this bound was only given in 2015 by Auger, Nicaud, and Pivoteau~\cite{AugerNicaudPivoteau2015}
and required a rather sophisticated amortization argument.%
\footnote{%
\label{fn:wrong-timsort}%
	A further manifestation of the complexity of Timsort 
	was reported by de~Gouw et al.~\cite{DeGouwBoerBubelHaehnleRotSteinhoefel2017}:
	The original rules to maintain the desired invariant 
	for run lengths on the stack was not sufficient in some cases.
	This (algorithmic!) bug had remained unnoticed until their attempt to formally
	verify the correctness of the Java implementation of Timsort failed because of it.
}
The core complication is that~--
unlike for standard mergesort variants~-- a given element might participate
in more than a logarithmic number of merges.
Indeed, Buss and Knop~\cite{BussKnop2018} have very recently shown 
that for some family of inputs, the average number of merges a single element participates in 
is at least $\bigl(\frac32-o(1)\bigr)\cdot \lg n$.
So in the worst case, Timsort does, \eg, 
(at least) \emph{1.5 times as many element moves as standard mergesort.}

In terms of adapting to existing order, the only proven guarantee for Timsort's running time 
is a trivial $O(n r)$ bound when the input consists of $r$ runs. 
Proving an informative upper bound like $O(n+n \log r )$ has remained elusive,
(although it is conjectured to hold in~\cite{AugerNicaudPivoteau2015} and~\cite{BussKnop2018}).
This is in sharp contrast to available alternatives:
Takaoka~\cite{Takaoka1998,Takaoka2009} and Barbay and Navarro~\cite{BarbayNavarro2013}
independently discovered a sorting method that adapts to the 
\emph{entropy of the distribution of run lengths:}
they sort an input consisting of $r$ runs with respective lengths $L_1,\ldots,L_r$
in time $O\bigl( (\mathcal H(\frac{L_1}n,\ldots,\frac{L_r}n) +1) n \bigr) \subseteq O(n + n\lg r)$,
where $\mathcal H(p_1,\ldots,p_r) = \sum_{i=1}^r p_i \lg(1/p_i)$ is the binary Shannon entropy.
Since $\mathcal H(\frac{L_1}n,\ldots,\frac{L_r}n) n - \Oh(n)$ comparisons are 
necessary for distinct keys, this is optimal up to linear terms.
Their algorithms are also conceptually simple:
find runs in a linear scan, determine an optimal merging order
using a Huffman tree of the run lengths, and execute those merges bottom-up in the tree.
We will refer to this algorithm to determine an optimal merging order 
as \emph{Huffman-Merge}.

Straight-forward implementations of Huffman-Merge
add significant overhead in terms of time and 
	space;
	(finding the Huffman tree requires storing and sorting the run lengths).
	This 
renders these methods uncompetitive to
(reasonable implementations of) elementary sorting methods.
Moreover, Huffman-Merge leads to an \emph{unstable} sorting method
since it merges non-adjacent runs.
The main motivation for the invention of Timsort was 
to find a fast general-purpose sorting method that is \emph{stable}~\cite{Peters2002mailinglist},
and the Java library even dictates the sorting method used for objects to be stable.
We remark that while stability is a much desired feature, 
practical, stable sorting methods do not try to \emph{exploit} the presence of 
duplicate elements to speed up sorting,
and we will focus on the performance for distinct keys in this article.

It is conceptually easy to modify the idea of Takaoka resp.\ Barbay-Navarro 
to sort stably: 
replace the Huffman tree by
an \emph{optimal binary search tree} 
and otherwise proceed 
	as before (using a stable merging procedure).
Since we only have weights at the leaves of the tree, 
we can compute this tree in $O(n + r\log r)$ time using the Hu-Tucker- or Garsia-Wachs-algorithm,
but $r$ can be $\Theta(n)$ and the algorithms are fairly sophisticated,
so this idea seems not very appealing for practical use.

In this paper,
we present two new natural mergesort variants that have the same optimal asymptotic 
running time $O\bigl( (\mathcal H(\frac{L_1}n,\ldots,\frac{L_r}n) +1) n \bigr)$ as Huffman-merge,
but incur much less overhead.
For that, we build upon classic algorithms for computing \emph{nearly-optimal binary search trees}~
\cite{Mehlhorn1984};
but the vital twist for practical methods is to neither explicitly store
the full tree, nor the lengths of all runs at any point in time.
In particular~-- much like Timsort~-- we only store a \emph{logarithmic} number of runs
at any point in time 
(in fact reducing their number from roughly $\log_\varphi \approx1.44 \lg n$ in Timsort to $\lg n$), 
but~-- much \emph{un}like Timsort~-- we retain the guarantee of an optimal merging order up to linear terms.
Our methods require at most $n \lg n + O(n)$ comparison in the worst case
and $\mathcal H(\frac{L_1}n,\ldots,\frac{L_r}n)n +3n$ for an input with runs of lengths $L_1,\ldots, L_r$.

We demonstrate in a running-time study that our methods achieve guaranteed 
(leading-term) optimal adaptive sorting in practice with negligible overhead 
to compute the merge order: 
unlike Timsort, our methods are \emph{not} slower than standard mergesort 
when no existing runs can be exploited.
If existing runs are present, mergesort and quicksort are outperformed by far.
Finally, we show that Timsort is slower than standard mergesort and our new methods
on certain inputs that do have existing runs, but whose lengths pattern hits a weak point
of Timsort's heuristic merging-order rule.

\smallskip\noindent
\textbf{\textsf{Outline:}} The rest of this paper is organized as follows.
In the remainder of this section we survey related work.
\wref{sec:preliminaries} contains notation and 
known results on optimal binary search trees that our work builds on.
The new algorithms and their analytical guarantees are presented in \wref{sec:algorithms}.
\wref{sec:experiments} reports on our running-time study, comparing
the the new methods to state-of-the-art sorting methods.
Finally, \wref{sec:conclusion} summarizes our findings.

\subsection{Adaptive Sorting}
\label{sec:related}

The idea to exploit existing ``structure'' in the input to speed up sorting
dates (at least) back to methods from the 1970s~\cite{Mehlhorn1979} that 
sort faster when the number of inversions is small.
A systematic treatment of this and many further measures of presortedness
(\eg, the number of inversions, the number of runs, and the number of shuffled up-sequences), 
their relation and how to sort \emph{adaptively} \wrt these measures 
are discussed by Estivill-Castro and Wood~\cite{EstivillCastroWood1992}.
While the focus of earlier works is mostly on combinatorial properties of permutations, 
a more recent trend is to consider more fine-grained statistical quantities.
For example, the above mentioned Huffman-Merge adapts to the
\emph{entropy} of the vector of run lengths~\cite{Takaoka1998,Takaoka2009,BarbayNavarro2013}.
Other similar measures are the entropy of the lengths of shuffled up-sequences~\cite{BarbayNavarro2013}
and the entropy of lengths of an LRM-partition~\cite{BarbayFischerNavarro2012}, 
a novel measure that lies between runs and shuffled up-sequences.

For multiset sorting, the fine-grained measure, the \emph{entropy of the multiplicities}, 
has been considered instead of the \emph{number} of unique values already in early work 
in the field (\eg~\cite{MunroSpira1976,Sedgewick1977equals}).
A more recent endeavor has been to find sorting methods that optimally adapt to
\emph{both presortedness and repeated values}.
Barbay, Ochoa, and Satti refer to this as \emph{synergistic sorting}~\cite{BarbayOchoaSatti2017}
and present an algorithm based on quicksort that is optimal up to a constant factor.
The method's practical performance is unclear.

We remark that (unstable) multiset sorting is the \emph{only} problem from the above list
for which a theoretically optimal algorithm has found wide-spread adoption in
programming libraries:
quicksort is known to almost optimally adapt to the entropy of multiplicities 
on average~\cite{Wegner1985,SedgewickBentley2002,Wild2018},
when elements equal to the pivot are excluded from recursive calls (fat-pivot partitioning).
Supposedly, sorting is so fast to start with that further improvements from
exploiting specific input characteristics are only fruitful if they can be realized
with minimal additional overhead.
Indeed, for algorithms that adapt to the number of inversions, 
Elmasry and Hammad~\cite{ElmasryHammad2009} found that 
the adaptive methods could only compete with good implementations of elementary sorting algorithms 
in terms of running time
for inputs with extremely few inversions (fewer than 1.5\%).
Translating the theoretical guarantees of adaptive sorting
into practical, efficient methods is an ongoing challenge.

\subsection{Lower bound}
\label{sec:lower-bound}

How much does it help for sorting an array $A[1..n]$ to
know that it contains $r$ runs of respective sizes $L_1,\ldots,L_r$,
\ie, to know the relative order of $A[1..L_1]$, $A[L_1+1..L_1+L_2]$ etc.?
If we assume distinct elements, a simple counting argument shows that there
are $\binom n{L_1,\ldots,L_r}$ permutations that are compatible with this setup.
(the number of ways to partition $n$ keys into $r$ subsets of given sizes.)
We thus need $\lg(n!) - \sum_{i=1}^r \lg (L_i!) = \mathcal H(\frac{L_1}n,\ldots,\frac{L_r}n) n - O(n)$ 
comparisons to sort such an input.
A formal argument for this lower bound is given by Barbay and Navarro~\cite{BarbayNavarro2013} 
in the proof of their Theorem~2. 

\subsection{Results on Timsort and stack-based mergesort}

Its good performance in running-time studies especially on partially sorted inputs
have lead to the adoption of Timsort in several programming libraries,
but as mentioned in the introduction, 
the complexity of the algorithm has precluded researchers from proving 
interesting adaptivity guarantees.
To make progress towards these, simplified variations of Timsort have been 
considered~\cite{AugerNicaudPivoteau2015,BussKnop2018}.
All of those methods work by maintaining a stack of runs yet to be merged
and proceed as follows: 
They find the next run in the input and push it onto the stack.
Then they consider the top $k$ elements on the stack (for $k$ a small constant like 3 or 4)
and decide based on these if any pair of them is to be merged.
If so, the two runs in the stack are replaced with the merged result and the rule is applied
repeatedly until the stack satisfies some invariant.
The invariant is chosen so as to keep the height of the stack small (logarithmic in $n$).

The simplest version, ``$\alpha$-stack sort''~\cite{AugerNicaudPivoteau2015}, 
merges the topmost two runs until the run lengths in the stack grow at least by a factor of $\alpha$,
(\eg, $\alpha=2$).
This method can lead to imbalanced merges (and hence runtime $\omega(n\log r)$~\cite{BussKnop2018};
the authors of~\cite{AugerNicaudPivoteau2015} also point this out in their conclusion):
if the next run is much larger than what is on the stack, a much more balanced
merging order results from first merging stack elements until they are at least as big
as the new run.
This modification is called ``$\alpha$-merge sort'', which achieves a worst-case 
guarantee of $O(n + n \log r)$, but the constant is provably not optimal~\cite{BussKnop2018}
(for any $\alpha>1$).

Timsort is quite similar to $\alpha$-merge sort for $\alpha=\varphi$ (the golden ratio)
by forcing the run lengths to grow at least like Fibonacci numbers.
The details of the rule are given in~\cite{AugerNicaudPivoteau2015} or~\cite{BussKnop2018} and
are quite intricate~-- 
and were indeed wrong in the first (widely-used) version of Timsort 
(see \wtpref{fn:wrong-timsort}).
While it is open if Timsort always runs in $O(n+n \log r)$ time,
Buss and Knop gave a family of inputs for which Timsort does asymptotically at least $1.5$ times
the required effort (in terms of merge costs, see \wref{sec:merge-costs}), and hence
proved that Timsort~-- like $\alpha$-merge sort~-- is \emph{not} optimally adaptive
even to the \emph{number} of runs $r$, not to speak of the entropy of the run lengths.

\section{Preliminaries}
\label{sec:preliminaries}

We implicitly assume that we are sorting an array $A[1..n]$ of $n$ elements.
By $\mathcal H$, we denote the binary Shannon entropy, \ie,
for $p_1,\ldots,p_m\in[0,1]$ with $p_1+\cdots+p_m = 1$ we let
$\mathcal H(p_1,\ldots,p_m) = \sum p_i \lg(1/p_i)$, where
$\lg=\log_2$.

We will always let $r$ denote the \emph{number of runs} in the input and 
$L_1,\ldots,L_r$ their respective lengths with $L_1+\cdots+L_r = n$.
In the literature, a \emph{run} usually means a maximal (contiguous) 
	weakly increasing%
	\footnote{%
		We use ``weakly increasing'' to mean ``nondecreasing''.\\
		\textsmaller[2]{(Isn't it better to say what we mean instead of not saying what we don't mean?)}
	}
	region,
but we adopt the convention from Timsort in this paper: a run is either
a maximal \emph{weakly increasing} region \emph{or} a \emph{strictly decreasing} region.
Decreasing runs are immediately reversed;
allowing only strict decreasing runs makes their \emph{stable} reversal trivial.
The algorithms are not directly affected by different conventions for what a ``run'' is;
they only rely on a unique partition of the input into sorted segments that can be found
by sequential scans.

\subsection{Nearly-Optimal Binary Search Trees}
\label{sec:optimal-BSTs}

In the \emph{optimal binary search tree problem,} 
we are given probabilities $\beta_1,\ldots,\beta_m$ to access the $m$
keys $K_1<\cdots<K_m$ (internal nodes) and probabilities
$\alpha_0,\ldots,\alpha_m$ to access the gaps (leaves) between these keys 
(setting $K_0 = -\infty$ and $K_{m+1}=+\infty$)
and we are interested in the binary search tree that minimizes the expected
search cost $C$, \ie, 
the expected number of (ternary) comparisons when access follow the given distribution.%
\footnote{%
	We deviate from the literature convention and use $m$ to denote the number of 
	keys to avoid confusion with $n$, the length of the arrays to sort,
	in the rest of the paper.
}
Nagaraj~\cite{Nagaraj1997} surveys various versions of the problem.
We confine ourselves to \emph{approximation algorithms} here.
Moreover, we only need the special case of \emph{alphabetic trees}
where all $\beta_j = 0$.

The following methods apply to the general problem, but we 
present them for the case of \emph{nearly-optimal alphabetic trees.}
So in the following let $\alpha_0,\ldots,\alpha_m$ with $\sum_{i=0}^m \alpha_i = 1$ be given.
If the details are done right, a greedy top-down approach produces provably good 
search trees~\cite{Bayer1975,Mehlhorn1977}:
choose the boundary closest to $\frac12$ as the bisection at the root 
(\emph{``weight-balancing heuristic''}).
Mehlhorn~\cite[\S III.4.2]{Mehlhorn1984} discusses two algorithms for 
nearly-optimal binary search trees that follow this scheme:
``Method 1'' is the straight-forward recursive application of the above rule,
whereas ``Method 2'' (\emph{``bisection heuristic''}) 
continues by strictly halving the \emph{original} interval in the recursive calls;
see \wref{fig:nearly-optimal-trees}.

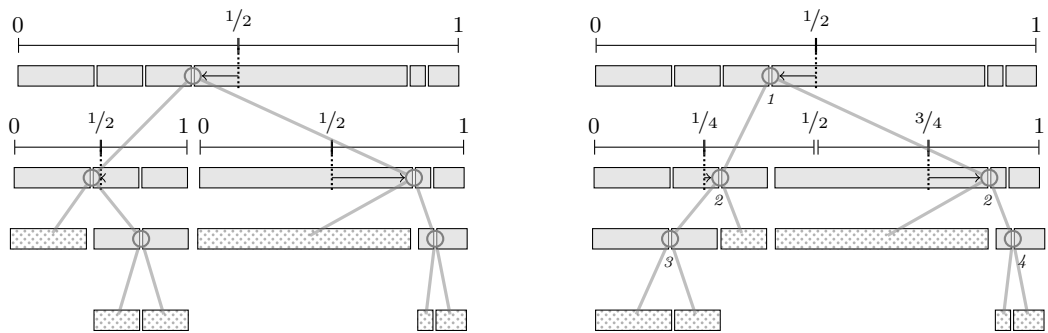
\begin{figure}[tbph]
	\begin{center}
	\begin{tikzpicture}[
			yscale=.27,
			xscale=.2,
			every node/.style={inner sep=2pt,font=\footnotesize},
			tree/.style={
				circle,draw,black!50,thick,minimum size=6pt,inner sep=0pt,fill=black!10,fill opacity=.5
			},
			orgrun/.style={pattern=crosshatch dots,pattern color=black!30},
	]
		\def\delta{0.2}
		\def\intskip{2}
		\useasboundingbox (-.5,4) rectangle (67.5,-11);
		
		\draw[fill=black!10] (0,0) coordinate (l)
			\foreach \d/\i in {5,3,3,14,1,2} {
				rectangle 
				++(\d,1) ++(\delta,-1) 
			}
			++(-\delta,0) coordinate (r)
		;
		\coordinate (m) at ($(l)!.5!(r)$);
		\draw[|-|] ($(l)+(0,\intskip)$) node[above=3pt] {0} -- ($(m)+(0,\intskip)$) node[above=3pt] {\nicefrac12} ;
		\draw[|-|] ($(r)+(0,\intskip)$) node[above=3pt] {1} -- ($(m)+(0,\intskip)$) ;
		\node[tree] (n0) at (5+3+3+2.5*\delta,.5) {};
		\draw[thick,densely dotted] (m) ++ (0,-.1) -- ++(0,2) ;
		\draw[->] ($(l)!.5!(r)$) ++(0,.5) -- (n0) ;
		
		\begin{scope}[shift={(-.25,-5)}]
			\draw[fill=black!10] (0,0) coordinate (l)
				\foreach \d/\i in {5,3,3} {
					rectangle 
					++(\d,1) ++(\delta,-1) 
				}
				++(-\delta,0) coordinate (r)
			;
			\coordinate (m) at ($(l)!.5!(r)$);
			\draw[|-|] ($(l)+(0,\intskip)$) node[above=3pt] {0} -- ($(m)+(0,\intskip)$) node[above=3pt] {\nicefrac12} ;
			\draw[|-|] ($(r)+(0,\intskip)$) node[above=3pt] {1~~} -- ($(m)+(0,\intskip)$) ;
			\node[tree] (n00) at (5+0.5*\delta,.5) {};
			\draw[thick,densely dotted] (m) ++ (0,-.1) -- ++(0,2) ;
			\draw[->] ($(l)!.5!(r)$) ++(0,.5) -- (n00) ;
			\draw[black!50,very thick,opacity=.5] (n0) to (n00);
		\end{scope}
		
		\begin{scope}[shift={(-.5,-8)}]
			\draw[fill=black!10,orgrun] (0,0) coordinate (l)
				\foreach \d/\i in {5} {
					rectangle node (n000) {}
					++(\d,1) ++(\delta,-1) 
				}
				++(-\delta,0) coordinate (r)
			;
			\draw[black!50,very thick,opacity=.5] (n00) to (n000);
		\end{scope}
		
		\begin{scope}[shift={(-.5+.5+5,-8)}]
			\draw[fill=black!10] (0,0) coordinate (l)
				\foreach \d/\i in {3,3} {
					rectangle node (n000) {}
					++(\d,1) ++(\delta,-1) 
				}
				++(-\delta,0) coordinate (r)
			;
			\node[tree] (n001) at (3+0.5*\delta,.5) {};
			\draw[black!50,very thick,opacity=.5] (n00) to (n001);
		\end{scope}
		\begin{scope}[shift={(+5,-12)}]
			\draw[fill=black!10,orgrun] (0,0) coordinate (l)
				\foreach \d/\i in {3} {
					rectangle node (n0010) {}
					++(\d,1) ++(\delta,-1) 
				}
				++(-\delta,0) coordinate (r)
			;
			\draw[black!50,very thick,opacity=.5] (n001) to (n0010);
		\end{scope}
		\begin{scope}[shift={(+5+3+\delta,-12)}]
			\draw[fill=black!10,orgrun] (0,0) coordinate (l)
				\foreach \d/\i in {3} {
					rectangle node (n0011) {}
					++(\d,1) ++(\delta,-1) 
				}
				++(-\delta,0) coordinate (r)
			;
			\draw[black!50,very thick,opacity=.5] (n001) to (n0011);
		\end{scope}
	
		\begin{scope}[shift={(0.25+5+3+3+3.5*\delta,-5)}]
			\draw[fill=black!10] (0,0) coordinate (l)
				\foreach \d/\i in {14,1,2} {
					rectangle 
					++(\d,1) ++(\delta,-1) 
				}
				++(-\delta,0) coordinate (r)
			;
			\coordinate (m) at ($(l)!.5!(r)$);
			\draw[|-|] ($(l)+(0,\intskip)$) node[above=3pt] {~0} -- ($(m)+(0,\intskip)$) node[above=3pt] {\nicefrac12} ;
			\draw[|-|] ($(r)+(0,\intskip)$) node[above=3pt] {1} -- ($(m)+(0,\intskip)$) ;
			\node[tree] (n01) at (14+0.5*\delta,.5) {};
			\draw[thick,densely dotted] (m) ++ (0,-.1) -- ++(0,2) ;
			\draw[->] ($(l)!.5!(r)$) ++(0,.5) -- (n01) ;
			\draw[black!50,very thick,opacity=.5] (n0) to (n01);
		\end{scope}	
		
		\begin{scope}[shift={(.125+5+3+3+3.5*\delta,-8)}]
			\draw[fill=black!10,orgrun] (0,0) coordinate (l)
				\foreach \d/\i in {14} {
					rectangle node (n010) {}
					++(\d,1) ++(\delta,-1) 
				}
				++(-\delta,0) coordinate (r)
			;
			\draw[black!50,very thick,opacity=.5] (n01) to (n010);
		\end{scope}
		
		\begin{scope}[shift={(.25+5+3+3+14+5.5*\delta,-8)}]
			\draw[fill=black!10] (0,0) coordinate (l)
				\foreach \d/\i in {1,2} {
					rectangle node (n000) {}
					++(\d,1) ++(\delta,-1) 
				}
				++(-\delta,0) coordinate (r)
			;
			\node[tree] (n011) at (1+0.5*\delta,.5) {};
			\draw[black!50,very thick,opacity=.5] (n01) to (n011);
		\end{scope}
		\begin{scope}[shift={(5+3+3+14+6.5*\delta,-12)}]
			\draw[fill=black!10,orgrun] (0,0) coordinate (l)
				\foreach \d/\i in {1} {
					rectangle node (n0110) {}
					++(\d,1) ++(\delta,-1) 
				}
				++(-\delta,0) coordinate (r)
			;
			\draw[black!50,very thick,opacity=.5] (n011) to (n0110);
		\end{scope}
		\begin{scope}[shift={(5+3+3+14+1+7.5*\delta,-12)}]
			\draw[fill=black!10,orgrun] (0,0) coordinate (l)
				\foreach \d/\i in {2} {
					rectangle node (n0111) {}
					++(\d,1) ++(\delta,-1) 
				}
				++(-\delta,0) coordinate (r)
			;
			\draw[black!50,very thick,opacity=.5] (n011) to (n0111);
		\end{scope}

	\begin{scope}[shift={(38,0)}]
	
		\draw[fill=black!10] (0,0) coordinate (l)
			\foreach \d/\i in {5,3,3,14,1,2} {
				rectangle 
				++(\d,1) ++(\delta,-1) 
			}
			++(-\delta,0) coordinate (r)
		;
		\coordinate (m) at ($(l)!.5!(r)$);
		\coordinate (MM) at ($(l)!.5!(r)$);
		\draw[|-|] ($(l)+(0,\intskip)$) node[above=3pt] {0} -- ($(m)+(0,\intskip)$) node[above=3pt] {\nicefrac12} ;
		\draw[|-|] ($(r)+(0,\intskip)$) node[above=3pt] {1} -- ($(m)+(0,\intskip)$) ;
		\node[tree] (n0) at (5+3+3+2.5*\delta,.5) {}; 
		\draw[thick,densely dotted] (m) ++ (0,-.1) -- ++(0,2) ;
		\draw[->] ($(l)!.5!(r)$) ++(0,.5) -- (n0) ;
		
		\begin{scope}[shift={(-.1,-5)}]
			\draw[fill=black!10] (0,0) coordinate (l)
				\foreach \d/\i in {5,3,3} {
					rectangle 
					++(\d,1) ++(\delta,-1) 
				}
				++(-\delta,0) coordinate (rr)
			;
			\coordinate (r) at ($(MM)+(-.1,-5)$) ;
			\coordinate (m) at ($(l)!.5!(r)$);
			\draw[|-|] ($(l)+(0,\intskip)$) node[above=3pt] {0} -- ($(m)+(0,\intskip)$) node[above=3pt] {\nicefrac14} ;
			\draw[|-|] ($(r)+(0,\intskip)$) node[above=3pt] {\nicefrac12} -- ($(m)+(0,\intskip)$) ;
			\node[tree] (n00) at (5+3+1.5*\delta,.5) {};
			\draw[thick,densely dotted] (m) ++ (0,-.1) -- ++(0,2) ;
			\draw[->] ($(l)!.5!(r)$) ++(0,.5) -- (n00) ;
			\draw[black!50,very thick,opacity=.5] (n0) to (n00);
		\end{scope}
		
		\begin{scope}[shift={(-.2,-8)}]
			\draw[fill=black!10] (0,0) coordinate (l)
				\foreach \d/\i in {5,3} {
					rectangle 
					++(\d,1) ++(\delta,-1) 
				}
				++(-\delta,0) coordinate (r)
			;
			\node[tree] (n000) at (5+0.5*\delta,.5) {};
			\draw[black!50,very thick,opacity=.5] (n00) to (n000);
		\end{scope}
		
		\begin{scope}[shift={(-.25+.5+5+3,-8)}]
			\draw[fill=black!10,orgrun] (0,0) coordinate (l)
				\foreach \d/\i in {3} {
					rectangle node (n001) {}
					++(\d,1) ++(\delta,-1) 
				}
				++(-\delta,0) coordinate (r)
			;
			\draw[black!50,very thick,opacity=.5] (n00) to (n001);
		\end{scope}
		\begin{scope}[shift={(0,-12)}]
			\draw[fill=black!10,orgrun] (0,0) coordinate (l)
				\foreach \d/\i in {5} {
					rectangle node (n0010) {}
					++(\d,1) ++(\delta,-1) 
				}
				++(-\delta,0) coordinate (r)
			;
			\draw[black!50,very thick,opacity=.5] (n000) to (n0010);
		\end{scope}
		\begin{scope}[shift={(+5+\delta,-12)}]
			\draw[fill=black!10,orgrun] (0,0) coordinate (l)
				\foreach \d/\i in {3} {
					rectangle node (n0011) {}
					++(\d,1) ++(\delta,-1) 
				}
				++(-\delta,0) coordinate (r)
			;
			\draw[black!50,very thick,opacity=.5] (n000) to (n0011);
		\end{scope}
	
		\begin{scope}[shift={(0.1+5+4+2+3.5*\delta,-5)}]
			\draw[fill=black!10] (0,0) coordinate (ll)
				\foreach \d/\i in {14,1,2} {
					rectangle 
					++(\d,1) ++(\delta,-1) 
				}
				++(-\delta,0) coordinate (r)
			;
			\coordinate (l) at ($(MM)+(.1,-5)$);
			\coordinate (m) at ($(l)!.5!(r)$);
			\draw[|-|] ($(l)+(0,\intskip)$) node[above=3pt] {} -- ($(m)+(0,\intskip)$) node[above=3pt] {\nicefrac34} ;
			\draw[|-|] ($(r)+(0,\intskip)$) node[above=3pt] {1} -- ($(m)+(0,\intskip)$) ;
			\node[tree] (n01) at (14+0.5*\delta,.5) {};
			\draw[thick,densely dotted] (m) ++ (0,-.1) -- ++(0,2) ;
			\draw[->] ($(l)!.5!(r)$) ++(0,.5) -- (n01) ;
			\draw[black!50,very thick,opacity=.5] (n0) to (n01);
		\end{scope}	
		
		\begin{scope}[shift={(.125+5+4+2+3.5*\delta,-8)}]
			\draw[fill=black!10,orgrun] (0,0) coordinate (l)
				\foreach \d/\i in {14} {
					rectangle node (n010) {}
					++(\d,1) ++(\delta,-1) 
				}
				++(-\delta,0) coordinate (r)
			;
			\draw[black!50,very thick,opacity=.5] (n01) to (n010);
		\end{scope}
		
		\begin{scope}[shift={(.25+5+4+2+14+5.5*\delta,-8)}]
			\draw[fill=black!10] (0,0) coordinate (l)
				\foreach \d/\i in {1,2} {
					rectangle node (n0111) {}
					++(\d,1) ++(\delta,-1) 
				}
				++(-\delta,0) coordinate (r)
			;
			\node[tree] (n011) at (1+0.5*\delta,.5) {};
			\draw[black!50,very thick,opacity=.5] (n01) to (n011);
		\end{scope}
		\begin{scope}[shift={(5+4+2+14+6.5*\delta,-12)}]
			\draw[fill=black!10,orgrun] (0,0) coordinate (l)
				\foreach \d/\i in {1} {
					rectangle node (n0110) {}
					++(\d,1) ++(\delta,-1) 
				}
				++(-\delta,0) coordinate (r)
			;
			\draw[black!50,very thick,opacity=.5] (n011) to (n0110);
		\end{scope}
		\begin{scope}[shift={(5+4+2+14+1+7.5*\delta,-12)}]
			\draw[fill=black!10,orgrun] (0,0) coordinate (l)
				\foreach \d/\i in {2} {
					rectangle node (n0111) {}
					++(\d,1) ++(\delta,-1) 
				}
				++(-\delta,0) coordinate (r)
			;
			\draw[black!50,very thick,opacity=.5] (n011) to (n0111);
		\end{scope}
		
		\node[yshift=-2ex,xshift=-.2ex] at (n000) {\smaller[2]\itshape 3};
		\node[yshift=-2ex,xshift=-.2ex] at (n00)  {\smaller[2]\itshape 2};
		\node[yshift=-2ex] at (n0)   {\smaller[2]\itshape 1};
		\node[yshift=-2ex,xshift=-.2ex] at (n01)  {\smaller[2]\itshape 2};
		\node[yshift=-2ex,xshift=.8ex] at (n011) {\smaller[2]\itshape 4};
		
	\end{scope}
	
	\end{tikzpicture}
	\end{center}
	\caption{%
		The two versions of weight-balancing for computing nearly-optimal alphabetic trees.
		The gap probabilities are proportional to $5,3,3,14,1,2$.
		\textbf{\sffamily Left:} Mehlhorn's ``Method~1'' chooses the split closest to the midpoint of the 
		subtree's actual weights ($\nicefrac12$ after renormalizing).
		\textbf{\sffamily Right:} ``Method~2'' continues to cut the original interval in half, irrespective
		of the total weight of the subtrees.
		The italic numbers are the powers of the nodes (see \wtpref{def:node-power}).
	}
\label{fig:nearly-optimal-trees}
\end{figure}

Method~1 was proposed in~\cite{WalkerGotlieb1972} and analyzed in \cite{Mehlhorn1975,Bayer1975};
Method~2 is discussed in~\cite{Mehlhorn1977}.
While Method~1 is arguably more natural, Method~2 has the advantage to yield
splits that are predictable without going through all steps of the recursion.
Both methods can be implemented to run in time $O(m)$ and yield very good trees.
(Recall that in the case $\beta_j = 0$ the classic information-theoretic 
argument dictates $C\ge \mathcal H$; 
Bayer~\cite{Bayer1975} gives lower bounds in the general case.)

\begin{theorem}[Nearly-Optimal BSTs]
\label{thm:nearly-opt-trees}
	Let $\alpha_0,\beta_1,\alpha_1,\ldots,\beta_m,\alpha_m \in [0,1]$ with $\sum \alpha_i+\sum\beta_j = 1$ 
	be given and let $\mathcal H = \sum_{i=0}^m \alpha_i \lg(1/\alpha_i) + \sum_{j=1}^m \beta_j \lg(1/\beta_j)$.
\begin{enumerate}[(i)]
\item 
	Method~1 yields a tree with search cost 
	$C \le \mathcal H + 2$.
	\cite[Thm\,4.8]{Bayer1975}
\item 
	If all $\beta_j = 0$,
	Method~1 yields a tree with search cost 
	$C \le \mathcal H + 2 - (m+3) \alpha_{\mathrm{min}}$,\\
	where $\alpha_{\mathrm{min}} = \min \{\alpha_0,\ldots,\alpha_m\}$.
	\cite{Horibe1977}
\item 
	Method~2 yields a tree with search cost
	$C \le \mathcal H + 1 + \sum \alpha_i$.
	\cite{Mehlhorn1977}
\end{enumerate}
\end{theorem}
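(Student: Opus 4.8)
The plan is to reduce all three bounds to a single quantity: writing $w(v)$ for the total weight of the keys and gaps in the subtree rooted at an internal node $v$, the expected search cost telescopes into $C = \sum_{v\text{ internal}} w(v)$, since each internal node contributes exactly one (ternary) comparison to every access that descends through it. Equivalently $C = \sum_j \beta_j b_j + \sum_i \alpha_i a_i$ with $b_j,a_i$ the depths of node $j$ and leaf $i$. In this form every claim follows once I control how fast subtree weights shrink along a root-to-leaf path, and the three parts differ only in how tightly that shrinkage can be guaranteed for the two heuristics.

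For Method~2 (part~(iii)) I would argue directly, because the cuts are fixed in advance. Identifying the cumulative-weight axis with $[0,1]$, the bisection heuristic realizes the Shannon--Fano--Elias (Gilbert--Moore) partition: the branch responsible for the dyadic interval of width $2^{-\ell}$ at level $\ell$ becomes a leaf as soon as that interval lies inside a single gap. Hence a gap of weight $\alpha_i$ sits at depth $a_i \le \lceil \lg(1/\alpha_i)\rceil + 1$, and an internal node of weight $\beta_j$ at a comparable depth bounded using $\lg(1/\beta_j)$. Plugging these into $C$ and using $\lceil x\rceil < x+1$ gives $\sum_i \alpha_i(\lg(1/\alpha_i)+2) + \sum_j \beta_j(\lg(1/\beta_j)+1) = \mathcal H + 1 + \sum\alpha_i$, which is the stated bound.

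For Method~1 (part~(i)) I would run an induction on the subtree weight with hypothesis $C \le \mathcal H_w + 2w$, where $\mathcal H_w = \sum \gamma\lg(w/\gamma)$ ranges over the constituents of the subtree and $w$ is their total weight. Using the recurrence $C = w + C_L + C_R$ and the identity $\mathcal H_w = \beta_0\lg(w/\beta_0) + \mathcal H_{w_L} + w_L\lg(w/w_L) + \mathcal H_{w_R} + w_R\lg(w/w_R)$ for the root key $\beta_0$, the step reduces to the inequality
\[
	w_L\lg\tfrac{w}{w_L} + w_R\lg\tfrac{w}{w_R} + \beta_0\lg\tfrac{w}{\beta_0} \;\ge\; w - 2\beta_0 .
\]
Whenever the weight-balancing rule can place a key at the root that makes both $w_L,w_R \le w/2$ --- which is exactly the situation where a heavy element is itself a key --- the first two logarithms on the left are each at least $1$, so the left-hand side is at least $(w-\beta_0)\ge w-2\beta_0$ and the inequality is immediate.

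The hard part will be the alphabetic case with a dominant \emph{leaf}, which is also what part~(ii) isolates. When all $\beta_j=0$ a heavy gap cannot be absorbed into the root, so one subtree may carry weight $>w/2$ and the clean per-level induction breaks (the left-hand side above drops below $w$). Here I would switch to a peeling/charging argument: such a heavy gap is forced to become a leaf within $O(1)$ further levels, and because its weight exceeds $w/2$ its target depth $\lg(1/\alpha_i)+2$ is so generous that its shallow actual depth can absorb the deficit created for its lighter siblings; iterating this charge along the path yields part~(i) in general. Sharpening the same bookkeeping to retain, rather than discard, the slack contributed by the lightest gap at each of the $m+1$ leaves and by the ceilings along the way is what produces the $-(m+3)\alpha_{\min}$ savings of part~(ii); keeping that slack accounting tight, and verifying that the heavy-leaf peeling does not double-count, is where I expect the real work to lie.
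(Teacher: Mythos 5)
First, a point of reference: the paper itself gives \emph{no} proof of \wref{thm:nearly-opt-trees}. It is stated as imported background, with (i) quoted from \cite{Bayer1975}, (ii) from \cite{Horibe1977}, and (iii) from \cite{Mehlhorn1977}; the only original material in the paper's appendix concerns the modified Method~2${}'$, not this theorem. So your attempt must be judged against those classical proofs. Judged that way, your setup and part~(iii) are in good shape: the telescoping identity $C=\sum_v w(v)$ over internal nodes is correct, and the Gilbert--Moore-style depth bounds for the bisection heuristic (a gap of weight $\alpha_i$ at comparison-depth at most $\lceil\lg(1/\alpha_i)\rceil+1$, a key of weight $\beta_j$ at comparison-depth at most $\lg(1/\beta_j)+1$) combine, exactly as you compute, to $\mathcal H + 2\sum\alpha_i+\sum\beta_j=\mathcal H+1+\sum\alpha_i$. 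That is essentially Mehlhorn's argument.

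The genuine gap is in parts (i) and (ii), and it is larger than your own assessment suggests. Your induction closes only when a key can be placed at the root with $w_L,w_R\le w/2$, the key's weight $\beta_0$ supplying the slack. But in the alphabetic case $\beta_j\equiv 0$ --- the only case this paper ever uses --- that slack vanishes and the inequality you need becomes
\[
	w_L\lg\tfrac{w}{w_L}+w_R\lg\tfrac{w}{w_R}\;\ge\; w
	\qquad\text{with } w_L+w_R=w,
\]
whose left side equals $w\cdot h(w_L/w)$ for the binary entropy $h\le 1$; it holds \emph{only} for an exact $\frac12$-split. So for alphabetic trees your ``easy case'' is empty, and every level of the recursion falls into the ``hard part'' --- not merely the levels where some leaf exceeds half the weight. (Even in the general weighted case your dichotomy is off: both halves $\le w/2$ can be unachievable without any single heavy item, namely whenever the item straddling the cumulative midpoint is a gap.) Consequently the entire burden of (i) rests on the ``peeling/charging'' argument, which you only gesture at; making that accounting work without a per-level entropy surplus is precisely the nontrivial content of Bayer's Theorem~4.8, and it is known not to follow from a one-level induction with additive constant $2$. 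Part (ii) is weaker still: you identify no mechanism producing $m+3$ copies of $\alpha_{\min}$, only the hope that ``slack accounting'' yields it; that refinement is Horibe's own separate contribution. As it stands, the proposal establishes (iii) and leaves (i) and (ii) unproven.
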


\subsection{Merge Costs}
\label{sec:merge-costs}

In this paper, we are primarily concerned with finding a good \emph{order} of binary merges 
for the existing runs in the input.
Following \cite{AugerNicaudPivoteau2015} and \cite{BussKnop2018},
we will define the \emph{merge cost} $M$ for merging two runs of lengths $m$ resp.\ $n$ 
as $M = m+n$, \ie, the size of the result.
This quantity has been studied earlier by 
Golin and Sedgewick~\cite{GolinSedgewick1993} without giving it a name.

Merge costs abstract away from key comparisons and element moves
and simplify computations (see next subsection).
Since any merge has to move most elements (except for rare lucky cases),
and the average number of comparisons using standard merge routines
is $m+n - \bigl(\frac m{n+1} + \frac{n}{m+1}\bigr)$, 
merge costs are a reasonable approximation,
in particular when $m$ and $n$ are roughly equal.
They always yield an upper bound for both the number of comparisons and moves.

\subsection{Merge Trees}
\label{sec:merge-trees}

Let $L_1,\ldots, L_r$ with $\sum L_i = n$ be the lengths of the runs in the input.
Any natural mergesort can be described as a rule to select some of the remaining runs, 
which are then merged and replaced by the merge result. If we always merge \emph{two} runs
this corresponds to a binary tree with the original runs at leaves $\leafnode1,\ldots,\leafnode{\like1r}$. 
Internal nodes correspond to the result of merging their children.
If we assign to internal node \internalnode j
the size $M_j$ of the (intermediate) merge result it represents,
then the overall merge cost is exactly $M = \sum_{\internalnodescript j} M_j$ (summing
over all internal nodes).
\wref{fig:nearly-optimal-trees} shows two examples of merge trees;
the merge costs are given by adding up all gray areas,%
\footnote{%
	The left tree is obviously better here
	and this is a typical outcome.
	But there are also inputs where Method 2 yields a 
	better tree than Method 1.
}
(ignoring the dotted leaves).

Let $d_i$ be the \emph{depth} of leaf \leafnode i (corresponding to the run of length $L_i$),
where depth is the number of edges on the path to the root.
Every element in the $i$th run is counted exactly $d_i$ times in $\sum_{\internalnodescript j} M_j$,
so we have $M = \sum_{i=1}^r d_i\cdot L_i$.
Dividing by $n$ yields $M/n = \sum_{i=1}^r d_i\cdot \alpha_i$ for $\alpha_i = L_i / n$, 
which happens to be the expected search time $C$ in the merge tree
if \leafnode i is requested with probability $\alpha_i$ for $i=1,\ldots,r$.
So the minimal-cost merge tree for given run lengths $L_1,\ldots,L_r$ 
is the optimal alphabetic tree for leaf probabilities $\frac{L_1}n,\ldots,\frac{L_r}n$
and it holds
$M \ge \mathcal H(\frac{L_1}n,\ldots,\frac{L_m}n) n$.
For distinct keys, the lower bound on comparisons (\wref{sec:lower-bound}) 
coincides up to linear terms 
with this lower bound.

Combining this fact with the linear-time methods for nearly-optimal binary search trees 
from \wref{sec:optimal-BSTs}
immediately gives a stable sorting method that adapts optimally to existing runs
up to an $O(n)$ term.
We call such a method a \emph{nearly-optimal (natural) mergesort}.
	A~direct implementation of this idea needs $\Theta(r)$ space to store $L_1,\ldots,L_r$ 
	and the merge tree and does an extraneous pass over the data 
	just to determine the run lengths.
	The purpose of this paper is to show that we can make the overhead 
	for finding a nearly-optimal merging order
	negligible in time and space.

\section{Nearly-Optimal Merging Orders}
\label{sec:algorithms}

We now describe two sorting methods that simulate nearly-optimal search tree algorithms
to compute nearly-optimal merging orders, but do so without ever storing the full merge tree 
or even the run lengths.

\subsection{Peeksort: A Simple Top-Down Method}

The first method is similar to standard top-down mergesort
in that it implicitly constructs a merge tree on the call stack.
Instead of blindly cutting the input in half, however, we mimic Mehlhorn's Method~1.
For that we need the \emph{run boundary closest to the middle} of the input: 
this will become the root of the merge tree.
Since we want to detect existing runs anyway at some point, 
we start by finding the run that contains the middle position.
The end point closer to the middle determines the top-level cut and 
we recursively sort the parts left and right of it.
A final merge completes the sort.

To avoid redundant scans, we pass on the information about already detected runs.
In the general case, we are sorting a range $A[\ell..r]$ whose 
prefix $A[\ell..e]$ and suffix $A[s..r]$ are runs.
Depending on whether the middle is contained in one of those runs, 
we have one of four different cases; apart from that
the overall procedure (\wref{alg:peeksort}) is quite straight-forward.

\begin{algorithm}[tbph]
\vspace*{-2ex}
\small
\tikzset{every node/.style={font=\scriptsize}}
	\begin{codebox}
		\Procname{$\proc{PeekSort}(A[\ell..r],e,s)$}
		\li \kw{if} $e \isequal r $ or $s \isequal \ell$ \kw{then} \Return
		\li $m \gets \ell + \bigl\lfloor \frac{r-\ell}2 \bigr\rfloor$
		\li \If $m \le e$   
			\>\>\>\>\>\Comment
			\begin{tikzpicture}[scale=.35,baseline=.3ex]
				\draw[fill=black!10] (0,0) rectangle (13,1) ;
				\draw[fill=black!10] (18,0) rectangle (20,1) ;
				\draw (0,0) rectangle (10,1) rectangle (20,0) ;
				\node[anchor=base] at ( 0.5,.25) {$\ell$} ;
				\node[anchor=base] at (19.5,.25) {$r$} ;
				\node[anchor=base] at (12.5,.25) {$e$} ;
				\node[anchor=base] at (18.5,.25) {$s$} ;
				\node[overlay] at (10,1.3) {$m$} ;
			\end{tikzpicture}
		\Then
			\li $\proc{PeekSort}(A[e+1..r],e+1,s)$
			\li $\proc{Merge}(A[\ell..e],A[e+1..r])$
		\li \Else \kw{if} $m \ge s$
			\>\>\>\>\Comment
			\begin{tikzpicture}[scale=.35,baseline=.3ex]
				\draw[fill=black!10] (0,0) rectangle (3,1) ;
				\draw[fill=black!10] (8,0) rectangle (20,1) ;
				\draw (0,0) rectangle (10,1) rectangle (20,0) ;
				\node[anchor=base] at ( 0.5,.25) {$\ell$} ;
				\node[anchor=base] at (19.5,.25) {$r$} ;
				\node[anchor=base] at (2.5,.25) {$e$} ;
				\node[anchor=base] at (8.5,.25) {$s$} ;
				\node[overlay] at (10,1.3) {$m$} ;
			\end{tikzpicture}
			\li $\proc{PeekSort}(A[\ell..s-1], e,s-1)$
			\li $\proc{Merge}(A[\ell..s-1],A[s..r])$
		\li \Else
			\zi \Comment Find existing run $A[i..j]$ containing position $m$
			\li $i \gets \proc{ExtendRunLeft}(A[m], \ell)$; \quad
				$j \gets \proc{ExtendRunRight}(A[m], r)$
			\li \kw{if} $i\isequal \ell$ and $j\isequal r$ \Return
			\li \If $m - i < j - m$
			\>\>\>\>\Comment
			\begin{tikzpicture}[scale=.35,baseline=.3ex]
				\draw[fill=black!10] (0,0) rectangle (3,1) ;
				\draw[fill=black!10] (18,0) rectangle (20,1) ;
				\draw[fill=black!10] (8.5,0) rectangle (16,1) ;
				\draw (0,0) rectangle (10,1) rectangle (20,0) ;
				\node[anchor=base] at ( 0.5,.25) {$\ell$} ;
				\node[anchor=base] at (19.5,.25) {$r$} ;
				\node[anchor=base] at (2.5,.25) {$e$} ;
				\node[anchor=base] at (18.5,.25) {$s$} ;
				\node[anchor=base] at (9.0,.25) {$i$} ;
				\node[anchor=base] at (15.5,.25) {$j$} ;
				\node[overlay] at (10,1.3) {$m$} ;
			\end{tikzpicture}
			\Then
				\li $\proc{PeekSort}(A[\ell..i-1],e,i-1)$
				\li $\proc{PeekSort}(A[i..r],j,s)$
				\li $\proc{Merge}(A[\ell..i-1],A[i..r])$
			\li \Else
			\>\>\>\Comment
			\begin{tikzpicture}[scale=.35,baseline=.3ex]
				\draw[fill=black!10] (0,0) rectangle (3,1) ;
				\draw[fill=black!10] (18,0) rectangle (20,1) ;
				\draw[fill=black!10] (5.5,0) rectangle (12,1) ;
				\draw (0,0) rectangle (10,1) rectangle (20,0) ;
				\node[anchor=base] at ( 0.5,.25) {$\ell$} ;
				\node[anchor=base] at (19.5,.25) {$r$} ;
				\node[anchor=base] at (2.5,.25) {$e$} ;
				\node[anchor=base] at (18.5,.25) {$s$} ;
				\node[anchor=base] at (6,.25) {$i$} ;
				\node[anchor=base] at (11.5,.25) {$j$} ;
				\node[overlay] at (10,1.3) {$m$} ;
			\end{tikzpicture}
				\li $\proc{PeekSort}(A[\ell..j],e,i)$
				\li $\proc{PeekSort}(A[j+1..r],j+1,s)$
				\li $\proc{Merge}(A[\ell..j],A[j+1..r])$
	\end{codebox}
	\vspace*{-1ex}
	\caption{%
		Peeksort: A simple top-down version of nearly-optimal natural mergesort.
		The initial call is $\proc{PeekSort}(A[1..n],1,n)$.
		Procedures \proc{ExtendRunLeft} (-\proc{Right}) scan left (right) starting at $A[m]$
		as long as the run continues (and we did not cross the second parameter).%
	}
	\label{alg:peeksort}
	\vspace*{-3ex}
\end{algorithm}

The following theorem shows that \proc{PeekSort} is indeed a nearly-optimal mergesort.
Unlike previous such methods, its code has very little overhead
(in terms of both time and space) in comparison
with a standard top-down mergesort,
so it is a promising method for a practical nearly-optimal mergesort.

\begin{theorem}
\label{thm:peeksort}
	The merge cost of \proc{PeekSort} on an input consisting of $r$ runs with 
	respective lengths $L_1,\ldots,L_r$ is at most $\mathcal H(\frac{L_1}n,\ldots,\frac{L_m}n)n + 2n - (r+2)$,
	the number of comparisons is at most $\mathcal H(\frac{L_1}n,\ldots,\frac{L_m}n)+3n - (2r+3)$.
	Both is optimal up to $O(n)$ terms (in the worst case).
\end{theorem}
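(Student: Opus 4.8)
The plan is to show that the binary merge tree built by \proc{PeekSort} is \emph{exactly} the tree produced by Mehlhorn's Method~1 (the weight-balancing heuristic of \wref{sec:optimal-BSTs}) on the leaf weights $\alpha_i = L_i/n$, and then to read off both bounds from \wref{thm:nearly-opt-trees}(ii) together with the identity $M = C\cdot n$ established in \wref{sec:merge-trees}.

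First I would prove the structural lemma. The key observation is that every array element carries the same weight $1/n$, so within any range $A[\ell..r]$ the cumulative weight of a prefix is proportional to its length; in particular the weight-$\tfrac12$ point of the range coincides with its middle position $m = \ell + \lfloor (r-\ell)/2\rfloor$. The boundary whose cumulative weight is closest to $\tfrac12$ must therefore be an endpoint of the run that contains $m$, and of its two endpoints the one nearer to $m$ is the one nearer in weight to $\tfrac12$. This is precisely the cut \proc{PeekSort} performs: after locating the run $A[i..j]$ containing $m$ it splits at $i$ or at $j{+}1$ according to whether $m-i<j-m$; the four cases of \wref{alg:peeksort} are merely the specializations in which $m$ falls inside the already-known prefix or suffix run. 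Since \proc{PeekSort} recurses on the \emph{actual} sub-ranges and re-takes \emph{their} midpoints, each recursive split again bisects the re-normalized weight of the sub-range, which is exactly what distinguishes Method~1 (re-normalize) from Method~2 (keep halving the original interval). Hence, under a consistent tie-breaking rule, induction on the recursion shows the two trees are identical.

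With the tree pinned down, the merge cost is immediate. The merge tree has $r$ leaves, hence $m=r-1$ internal nodes, and $\alpha_{\min}=\min_i L_i/n\ge 1/n$ because every run has length at least~$1$. Substituting $m=r-1$ into \wref{thm:nearly-opt-trees}(ii) gives search cost $C\le \mathcal H + 2 - (r+2)\alpha_{\min}\le \mathcal H + 2 - (r+2)/n$, and multiplying by $n$ via $M=Cn$ yields $M\le \mathcal H n + 2n - (r+2)$. For the comparison count I would split the work into run detection and merging: the $r-1$ binary merges each cost at most one comparison less than their merge cost, contributing $\le M-(r-1)$ comparisons, while detecting the runs examines each of the $n-1$ adjacent pairs essentially once, contributing $\le n-1$ comparisons. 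Summing gives about $M+n-r$ comparisons, and substituting the merge-cost bound produces a bound of the stated form $\mathcal H n + 3n - O(r)$. Optimality is then inherited from the lower bounds already in hand: $M\ge \mathcal H n$ from \wref{sec:merge-trees} and the matching $\mathcal H n - O(n)$ comparison bound from \wref{sec:lower-bound} show both quantities lie within an additive $O(n)$ of optimal.

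I expect the structural lemma to be the crux: one must verify that the position-based cut really realizes the weight-$\tfrac12$ boundary in each of the four cases (and in the base cases), and that the propagated parameters $e,s$ keep the midpoint-run detection consistent through the recursion. The second delicate point is the book-keeping that no adjacent pair is compared twice during run detection; this rests on feeding the discovered boundaries into the recursive calls (the $e$ and $s$ arguments) so that a gap resolved while locating one midpoint-run is never re-scanned while locating another. Both steps are conceptually clear, but the case analysis must be carried out carefully to pin down the exact constants in the linear terms.
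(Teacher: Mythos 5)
Your proposal follows essentially the same route as the paper's proof: identify \proc{PeekSort}'s merge tree with the tree produced by Mehlhorn's Method~1 on leaf weights $L_i/n$, apply \wref{thm:nearly-opt-trees}(ii) with $m=r-1$ and $\alpha_{\mathrm{min}}\ge 1/n$ together with $M=Cn$, count comparisons as (merge cost minus one per merge) plus $n-1$ for run detection, and cite the Barbay--Navarro lower bound for optimality. The only difference is that you sketch a justification of the structural identification (which the paper simply asserts), so your argument is correct and, if anything, slightly more detailed than the original.
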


\begin{proof}
	The recursive calls of \wref{alg:peeksort} produce the same tree as Mehlhorn's Method~1
	with input $(\alpha_0,\ldots,\alpha_m) = (\frac{L_1}n,\ldots,\frac{L_r}n)$ (\ie, $m=r-1$)
	and $\beta_j=0$.
	By \wref{thm:nearly-opt-trees}--(ii), the search costs in this tree are 
	$C \le \mathcal H+2 - (m+3)\alpha_{\mathrm{min}}$ with $\mathcal H = \mathcal H(\frac{L_1}n,\ldots,\frac{L_m}n)$.
	Since $L_j \ge 1$, we have $\alpha_{\mathrm{min}} \ge\frac1n$.
	As argued in \wref{sec:merge-trees}, the overall merge costs are then given by
	$M = C n \le \mathcal H n + 2n - (r+2)$, which is within $O(n)$ of the lower bound for $M$.
	
	We save at least one comparison per merge since merging runs of lengths $m$ and $n$ 
	requires at most $n+m-1$ comparisons. In total, we do exactly $r-1$ merge operations. 
	Apart from merging, we need a total of $n-1$ additional comparisons to detect the existing runs
	in the input.
	Barbay and Navarro~\cite[Thm.\,2]{BarbayNavarro2013} argued that $\mathcal H n - O(n)$ 
	comparisons are necessary if the elements in the input are all distinct.
\end{proof}

\subsection{Powersort: A Single-Pass Stack-Based Method}

One little blemish remains in \proc{PeekSort}: we have to use ``random accesses'' into the
middle of the array to decide how to proceed. 
Even though we only use cache-friendly sequential scans, 
the I/O operations to load the middle run are effectively wasted, 
since it will often be merged only much later (after further recursive calls).
Timsort and the other stack-based variants from~\cite{AugerNicaudPivoteau2015,BussKnop2018}
proceed in one left-to-right scan over the input and merge the top runs 
on their stack. 
This increases the likelihood to still have (parts of) the most recently detected run 
in cache when it is merged subsequently.

\subsubsection{The power of top-down in a bottom-up method}

Method~2 to construct nearly-optimal search trees suggests the following definition:
\begin{definition}[Node Power]
\label{def:node-power}
	Let $\alpha_0,\ldots,\alpha_m$, $\sum \alpha_j=1$ be leaf probabilities. 
	For $1\le j\le m$,
	let \internalnode j be the internal node separating the $(j-1)$st and $j$th leaf.
	The \emph{power} of (the split at) node \internalnode j is
	\begin{align*}\SwapAboveDisplaySkip
			P_j
		&\wrel=
			\min \Biggl\{ 
				\ell\in\N : \biggl\lfloor \frac a{2^{-\ell}} \biggr\rfloor 
					< \biggl\lfloor \frac b{2^{-\ell}} \biggr\rfloor  
			\Biggr\},\;
	\text{where }
			a
		\wrel=
			\sum_{i=0}^{j-1} \alpha_i - \tfrac12 \alpha_{j-1},\;
			b
		\wrel=
			\sum_{i=0}^{j-1} \alpha_i + \tfrac12 \alpha_{j}.
	\end{align*}
	($P_j$ is the index of the first bit where the (binary) fractional parts of $a$ and $b$ differ.)
\end{definition}
Intuitively, $P_j$ is the ``intended'' depth of \internalnode j, but 
nodes occasionally end up higher in 
the tree if some leaf has a large weight relative to the current subtree,
(see the rightmost branch in \wref{fig:nearly-optimal-trees}).
Mehlhorn's~\cite{Mehlhorn1977,Mehlhorn1984} 
original implementation of Method~2, procedure \emph{construct-tree},
does not single out the case that the next desired cut point lies
\emph{outside} the range of a subtree.
This reduces the number of cases, but for our application, it is more convenient 
to explicitly check for this out-of-range case, and if it occurs to
directly proceed to the next cut point.
We refer to the modified algorithm as \emph{Method~2\/${}'$};
\wref{app:method-2-prime} gives the details and shows that the changes do not affect
the guarantee for Method~2 in  \wref{thm:nearly-opt-trees}.
In fact Method~2${}'$ seems to typically yield slightly \emph{better} trees 
than Method~2, but there are also counterexamples.

The core benefit of Method~2${}'$, however, is that the resulting tree is 
characterized by local information, namely the node powers,
without requiring any coordination of a top-down recursive procedure.
\begin{lemma}[Path power monotonicity]
\label{lem:path-monotonicity}
	Consider the tree constructed by Method~2\/${}'$.
	The powers of internal nodes along any root-to-leaf path is strictly increasing.
\end{lemma}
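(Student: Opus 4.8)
The plan is to reinterpret the node power geometrically and then track how Method~2${}'$ refines dyadic intervals. Write $c_k = \sum_{i=0}^k \alpha_i$ for the cumulative leaf weights (with $c_{-1}=0$) and let $\mu_i = \tfrac12(c_{i-1}+c_i)$ be the midpoint of the $i$th leaf interval; the $\mu_i$ are strictly increasing, since in our application all $\alpha_i = L_i/n > 0$. In \wref{def:node-power} the quantities $a$ and $b$ for internal node $j$ are exactly $\mu_{j-1}$ and $\mu_j$, so $P_j$ is the coarsest level $\ell$ at which $\mu_{j-1}$ and $\mu_j$ fall into different dyadic intervals of length $2^{-\ell}$. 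In particular there is a unique dyadic rational $s_j = (2q+1)2^{-P_j}$ with \emph{odd} numerator satisfying $\mu_{j-1} < s_j \le \mu_j$, and no dyadic rational of any coarser level lies strictly between $\mu_{j-1}$ and $\mu_j$; I will think of $s_j$ as the ``cut point'' carried by node $j$.

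First I would establish the invariant that makes the power coincide with a recursion depth. Recall that Method~2${}'$ recurses on a dyadic interval $I=[k2^{-d},(k+1)2^{-d}]$, cutting it at its midpoint $x_m = (2k+1)2^{-(d+1)}$ and, in the out-of-range case (all remaining leaves on one side of $x_m$), skipping the placement of a node and descending into the single nonempty half. Routing a leaf to the half containing its midpoint $\mu_i$ keeps each subtree's leaf set contiguous, so the natural invariant to prove by induction on $d$ is: \emph{the leaf set of the subtree attached to $I$ is exactly $\{\,i : \mu_i \in I\,\}$.} Granting this, a node is placed (rather than skipped) at $I$ precisely when $x_m$ separates two consecutive midpoints, i.e.\ when $x_m \in (\mu_{j-1},\mu_j]$ for some $j$; and since $\mu_{j-1}$ and $\mu_j$ were routed together through every coarser interval, the first level at which they are separated is by definition $P_j$, with separating point $s_j = x_m$. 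Hence \emph{internal node $j$ is created at the interval $I_j$ of dyadic depth $P_j-1$, by the cut at level $P_j$.}

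The monotonicity then follows immediately. Let node $i$ be the parent of node $j$ on a root-to-leaf path. Node $i$ is created at $I_i$ of depth $P_i-1$, and its cut splits $I_i$ into two halves, each of dyadic depth $P_i$; the subtree containing $j$ lives inside one of these halves (possibly after further skips, which only increase depth). Thus $j$'s creation interval $I_j$ is nested inside a depth-$P_i$ interval, so $P_j - 1 = \operatorname{depth}(I_j) \ge P_i$, giving $P_j \ge P_i + 1 > P_i$; chaining this inequality along the path proves strict increase. The main obstacle is the inductive invariant of the second paragraph: it forces one to pin down the precise semantics of Method~2${}'$ from \wref{app:method-2-prime} — in particular the midpoint-based routing, the treatment of a midpoint that coincides with a cut point (tie-broken to the upper half, consistent with $s_j \le \mu_j$), and the bookkeeping of the skip step — and to verify that descending into \emph{every} nonempty half guarantees that the depth-$(P_j-1)$ interval straddling $\mu_{j-1}$ and $\mu_j$ is actually reached, so that node $j$ can be created neither at a coarser level (ruled out by minimality of $P_j$) nor at a finer one.
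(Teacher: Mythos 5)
Your proposal is correct and is essentially the paper's own argument in geometric dress: your key step that node $j$ is created at a dyadic interval of depth $P_j-1$ (equivalently, by the cut at level $P_j$) is exactly the paper's central claim that whenever $\mathit{construct\text-tree}(i,j,\mathit{cut},\ell)$ creates a node $k$ one has $P_k=\ell$, and your nesting-of-intervals conclusion (skips only increase depth) matches the paper's observation that $\ell$ is incremented in every recursive call, including cases L and R. The inductive invariant you flag as the main obstacle~-- subtree leaf set $=$ set of midpoints $\mu_i$ inside the current dyadic interval, ties routed to the upper half~-- is precisely the (stronger form of) Mehlhorn's invariant~(4) that the paper invokes without reproving, so what remains in your plan is the same routine bookkeeping the paper also delegates to Mehlhorn's analysis.
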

The proof is given in \wref{app:method-2-prime}.
\begin{corollary}
\label{cor:cartesian-tree}
	The tree constructed by Method~2\/${}'$ for leaf probabilities $\alpha_0,\ldots,\alpha_m$
	is the (min-oriented) Cartesian tree for the sequence of node powers $P_1,\ldots,P_m$.
	It can thus be constructed iteratively (left to right) by the algorithm of 
	Gabow, Bentley, and Tarjan~\cite{GabowBentleyTarjan1984}.
\end{corollary}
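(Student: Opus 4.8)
The plan is to verify that the tree produced by Method~2${}'$ satisfies the two defining properties of a min-oriented Cartesian tree of the key sequence $P_1,\ldots,P_m$, and then to argue that these two properties pin the tree down uniquely, so that it coincides with the output of the incremental construction of Gabow, Bentley, and Tarjan~\cite{GabowBentleyTarjan1984}. First I would establish the \emph{symmetric-order} (binary-search-tree) property with respect to the indices $1,\ldots,m$. This is immediate from the recursive shape of Method~2${}'$: each invocation operates on a contiguous block of leaves, and whenever the current dyadic cut point falls inside that block it selects the straddling internal node \internalnode{j} as the split and recurses on the leaves to its left and the leaves to its right (when the cut point falls outside the block, Method~2${}'$ merely advances to the next cut point without emitting a node, as described in \wref{app:method-2-prime}). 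Hence every emitted node splits its range into a left and a right contiguous subrange, all internal nodes of smaller index land in its left subtree and all of larger index in its right subtree, and an in-order traversal therefore lists the internal nodes in increasing index order. This is exactly the requirement that the tree be a search tree on the positions of $P_1,\ldots,P_m$.

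Next I would read off the \emph{(min-)heap} property directly from \wref{lem:path-monotonicity}. Path power monotonicity asserts that the powers strictly increase along every root-to-leaf path; equivalently, every internal node has strictly smaller power than each of its proper descendants, so each node carries the (strict) minimum power of the subtree it roots. Since the smallest power sits at the root, this is precisely the min-heap condition on the sequence $P_1,\ldots,P_m$. Combined with the search-tree property from the first step, it identifies the Method~2${}'$ tree as a min-oriented Cartesian tree of the node powers.

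It remains to argue uniqueness, and this is where the \emph{strictness} in \wref{lem:path-monotonicity} is essential; I expect it to be the only real subtlety. A binary tree that is simultaneously a search tree on indices and \emph{strictly} heap-ordered is unique: its root must have power strictly smaller than \emph{all} other nodes (each of which is a descendant), so the minimum power over the index range is attained uniquely and forces the choice of root; applying this recursively to the left and right subranges fixes the entire tree. The same observation shows that two internal nodes sharing the same power can never lie on a common root-to-leaf path, so ties are confined to disjoint subtrees and create no ambiguity. Therefore the Method~2${}'$ tree is \emph{the} min-oriented Cartesian tree of $P_1,\ldots,P_m$, and since such a tree is exactly what one obtains by scanning the powers from left to right while maintaining the right spine of the partial tree (popping a spine node as soon as its power is not strictly smaller than the incoming one, the convention dictated by the strictness above), the linear-time algorithm of~\cite{GabowBentleyTarjan1984} constructs it, as claimed.
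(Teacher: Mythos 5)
Your proof is correct and takes essentially the same route as the paper, which likewise obtains the corollary by combining the symmetric-order (search-tree) property inherent in Method~2${}'$'s recursive construction with the strict heap order on powers given by \wref{lem:path-monotonicity}. The uniqueness-under-ties argument you spell out explicitly is left implicit in the paper, but it is a detail of the same approach rather than a different one.
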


\subsubsection{Merging on-the-fly}

We implicitly use the observation from \wref{cor:cartesian-tree} in our algorithm ``powersort''
to construct the tree from left to right.
Whenever the next internal node has a smaller power than the preceding one, 
we are following an edge from a left child 
up to its parent. 
That means that this subtree does not change anymore and we can execute any pending merges 
in it before continuing.
If we are instead following an edge down to a right child of a node, that subtree is still ``open''
and we push the corresponding run on the stack.
\wref{alg:powersort} shows the detailed code.

\begin{algorithm}[tbph]
\vspace*{-2ex}
\small
\tikzset{every node/.style={font=\scriptsize}}
	\begin{codebox}
		\Procname{$\proc{PowerSort}(A[1..n])$}
		\li $X \gets $ stack of runs   \>\>\>\>\>\>\textsmaller{(capacity $\lfloor\lg(n)\rfloor+1$)}
		\li $P \gets $ stack of powers \>\>\>\>\>\>\textsmaller{(capacity $\lfloor\lg(n)\rfloor+1$)}
		\li $s_1 \gets 1$; \; 
			$e_1 = \proc{ExtendRunRight}(A[1],n)$
			\quad\Comment $A[s_1..e_1]$ is leftmost run
		\li \While $e_1 < n$
		\Do
			\li $s_2 \gets e_1+1$; \;
				$e_2 \gets \proc{ExtendRunRight}(A[s_2],n)$
				\quad\Comment $A[s_2..e_2]$ next run
			\li $p \gets \proc{NodePower}(s_1,e_1,s_2,e_2,n)$ 
			\quad \Comment $P_j$ for node \internalnode j between $A[s_1..e_1]$ and $A[s_2..e_2]$
			\li \While $P.top() > p$ \>\>\>\>\quad\Comment previous merge deeper in tree than current
			\Do
				\li $P.pop()$
					\>\>\>\quad\Comment $\leadsto$ merge and replace run~$A[s_1..e_1]$ by result
				\li $(s_1,e_1) \gets \proc{Merge}(X.\id{pop}(), \, A[s_1..e_1])$
			\End
			\li $X.\id{push}(A[s_1,e_1])$; \; $P.\id{push}(p)$
			\li $s_1 \gets s_2$; \; $e_1 \gets e_2$
		\EndWhile
		\Comment Now $A[s_1..e_1]$ is the rightmost run
		\li \While $\neg X.\id{empty}()$
		\Do
			\li $(s_1,e_1) \gets \proc{Merge}(X.\id{pop}(), \, A[s_1..e_1])$
		\End
	\end{codebox}
	\vspace*{-5ex}
	\begin{codebox}
		\Procname{$\proc{NodePower}(s_1,e_1,s_2,e_2,n)$}
		\li $n_1 \gets e_1-s_1 + 1$; \; 
			$n_2 \gets e_2 - s_2 + 1$; \; $\ell \gets 0$
		\li $a \gets ( s_1 + n_1/2 - 1 ) / n$; \;
			$b \gets ( s_2 + n_2/2 - 1 ) / n$
		\li \kw{while} $\lfloor a\cdot 2^\ell \rfloor \isequal \lfloor b\cdot 2^\ell \rfloor$ 
			\kw{do} $\ell\gets\ell+1$ \kw{end while}
		\li \Return$(\ell)$
	\end{codebox}
	\vspace*{-1ex}
	\caption{%
		Powersort: A one-pass stack-based nearly-optimal natural mergesort.
		Procedure \proc{ExtendRunRight} scans right as long as the run continues.
	}
	\label{alg:powersort}
	\vspace*{-3ex}
\end{algorithm}

The runs on the stack correspond to nodes with strictly increasing powers,
so we can bound the stack height by the maximal $P_j$.
Since our leaf probabilities here are $\alpha_j = \frac{L_j}n\ge \frac1n$, 
we have $P_j \le \lfloor \lg n\rfloor + 1$.

\begin{theorem}
\label{thm:powersort}
	The merge cost of \proc{PowerSort} is at most $\mathcal H(\frac{L_1}n,\ldots,\frac{L_m}n)+2n$,
	the number of comparisons is at most $\mathcal H(\frac{L_1}n,\ldots,\frac{L_m}n)+3n-r$.
	Moreover, (apart from buffers for merging) only $O(\log n)$ words of extra space 
	are required.
\end{theorem}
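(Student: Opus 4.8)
The plan is to mirror the proof of \wref{thm:peeksort}, with Method~2${}'$ in place of Method~1. First I would argue that the merges executed by \proc{PowerSort} realize exactly the tree that Method~2${}'$ produces for the leaf probabilities $(\alpha_0,\ldots,\alpha_m)=(\frac{L_1}n,\ldots,\frac{L_r}n)$, so that $m=r-1$. By \wref{cor:cartesian-tree} this tree is the min-oriented Cartesian tree of the node powers $P_1,\ldots,P_m$, and the stack manipulation in \wref{alg:powersort} is precisely the left-to-right Cartesian-tree construction of Gabow, Bentley, and Tarjan: the inner loop pops and merges every stacked run whose recorded power exceeds the current power $p$, and then pushes the combined run together with $p$. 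Each call to \proc{Merge} thus corresponds to one internal node of the Method~2${}'$ tree, and the concluding loop flushes the stack so that all $r-1$ internal-node merges are indeed carried out.

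Granting this correspondence, the merge-cost bound is immediate. As established in \wref{sec:merge-trees}, the total merge cost is $M = Cn$, where $C=\sum_{i=1}^{r} d_i\,\alpha_i$ is the search cost of the merge tree under the leaf probabilities $\alpha_i=\frac{L_i}n$. The guarantee of \wref{thm:nearly-opt-trees}--(iii) applies to Method~2${}'$ as well (\wref{app:method-2-prime}), giving $C \le \mathcal H + 1 + \sum_i \alpha_i = \mathcal H + 2$ because $\sum_i \alpha_i = 1$. Hence $M \le (\mathcal H+2)\,n = \mathcal H\bigl(\tfrac{L_1}n,\ldots,\tfrac{L_r}n\bigr)\,n + 2n$.

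For the comparison count I would separate run detection from merging. Detecting the runs amounts to a single left-to-right scan in which \proc{ExtendRunRight} compares each adjacent pair at most once, costing at most $n-1$ comparisons to split the input into its $r$ runs. A merge of runs of lengths $m$ and $n$ needs at most $m+n-1$ comparisons, one below its merge cost, and \proc{PowerSort} performs exactly $r-1$ merges, so the merging comparisons total at most $M-(r-1)$. Summing the two contributions gives at most $M + n - r \le \mathcal H\bigl(\tfrac{L_1}n,\ldots,\tfrac{L_r}n\bigr)\,n + 3n - r$ comparisons.

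It remains to bound the space, \ie, to show the stack cannot overflow. At every moment the powers stored in $P$ are those of the internal nodes on the current rightmost path of the partially built tree; by \wref{lem:path-monotonicity} they strictly increase from the bottom to the top of the stack. Since each leaf probability satisfies $\alpha_j=\frac{L_j}n\ge\frac1n$, \wref{def:node-power} yields $P_j \le \lfloor\lg n\rfloor+1$, so a strictly increasing sequence of such integer powers has length at most $\lfloor\lg n\rfloor+1$. This caps the stack height, and as each entry occupies $O(1)$ words, the extra space (beyond the merge buffer) is $O(\log n)$. I expect the main obstacle to be the very first step: rigorously matching the runtime behavior of \proc{PowerSort} to the Method~2${}'$ tree, and in particular verifying that the runs resident on the stack at any instant correspond to a root-to-leaf (right-spine) path, so that \wref{lem:path-monotonicity} can be applied to them; once \wref{cor:cartesian-tree} and \wref{thm:nearly-opt-trees}--(iii) are in hand, the remaining estimates are routine bookkeeping.
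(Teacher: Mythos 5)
Your proposal is correct and follows essentially the same route as the paper: identify PowerSort's merge tree with the Method~2${}'$ tree (via the Cartesian-tree correspondence of \wref{cor:cartesian-tree}), apply \wref{thm:nearly-opt-trees}--(iii) to get $C \le \mathcal H + 2$ and hence $M = Cn \le \mathcal H n + 2n$, count comparisons exactly as in \wref{thm:peeksort} ($n-1$ for run detection, one saved per each of the $r-1$ merges), and bound the stack height by $\lfloor\lg n\rfloor + 1$ using \wref{lem:path-monotonicity} and $\alpha_j \ge \frac1n$. The only difference is that you spell out details the paper leaves implicit (the Gabow--Bentley--Tarjan correspondence and the space bound, which the paper states in the text just before the theorem).
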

\begin{proof}
	The merge tree of \proc{PowerSort} is exactly the search tree constructed by Method~2${}'$
	on leaf probabilities $(\alpha_0,\ldots,\alpha_m) = (\frac{L_1}n,\ldots,\frac{L_r}n)$
	and $\beta_j=0$.
	By \wref{thm:nearly-opt-trees}--(iii), the search costs are 
	$C \le \mathcal H+2$ with $\mathcal H = \mathcal H(\frac{L_1}n,\ldots,\frac{L_m}n)$, so
	the overall merge costs are 
	$M = C n \le \mathcal H n + 2n$, which is within $O(n)$ of the lower bound for $M$.
	The comparisons follow as for \wref{thm:peeksort}.
\end{proof}

\section{Running-Time Study}
\label{sec:experiments}

We conducted a running-time study comparing the
two new nearly-optimal mergesorts with 
current state-of-the-art implementations and
elementary mergesort variants.
The code is available on github~\cite{githubCode}.

\needspace{3\baselineskip}
\smallskip\noindent
The goal of this study is to show that
\begin{enumerate}
\item 
	peeksort and powersort have very little overhead compared to standard (non-natural)
	mergesort variants (\ie, they are never (much) slower), and at the same time
\item
	peeksort and powersort outperform other mergesort variants on
	partially presorted inputs.
\end{enumerate}
Timsort is arguably the most used adaptive sorting method;
even though analytical guarantees are still to be found,
its efficiency in particular for partially sorted inputs has been 
demonstrated empirically~\cite{Peters2002}.
A secondary goal is hence to
\begin{enumerate}
\setcounter{enumi}{2}
\item
	investigate the typical merge costs of Timsort on different inputs.
\end{enumerate}

\subsection{Setup}
Oracle's Java runtime library includes a highly tuned Timsort implementation;
to be able to directly compare with it, we chose to implement our algorithms in Java.
The Timsort implementation is used for \texttt{Object[]}, \ie,
arrays of \emph{references} to objects;
since the location of objects on the heap is hard to control and likely to produce
more or less cache misses from run to run, we chose to sort \texttt{int[]}s instead
to obtain more reproducible results.
We thus modified the library implementation of Timsort accordingly.
This scenario makes key comparisons and element moves relatively cheap and thereby 
emphasizes the remaining overhead of the methods,
which is in line with
our primary goal 
	1) to study the impact of the additional bookkeeping
	required by the adaptive methods.
We compare our methods with simple top-down and bottom-up mergesort implementations.
We use the code given by Sedgewick~\cite[Programs~8.3 and~8.5]{Sedgewick2003}
with his simple merge method (Program~8.2) as its basis;
in both cases, we add a check before calling merge to detect if the runs happen 
to be already in sorted order, and we use insertionsort for base cases of size $n\le w = 24$.
(For bottom-up mergesort, we start by sorting chunks of size $w=24$.)
Our Java implementations of peeksort and powersort are described in more detail in \wref{app:java-code}.
Apart from a mildly optimized version of the pseudocode, we added the same cutoff / minimal run length 
($w=24$) as above.

All our methods call the same merge procedure, whereas
the library Timsort contains a modified merge method
that tries to save key comparisons:
when only elements from the same run are selected for the output repeatedly,
Timsort enters a \emph{``galloping mode''} and uses exponential searches 
(instead of the conventional sequential search) to find
the insertion position of the next element. Details are described by Peters~\cite{Peters2002}.
Since saving comparisons is not of utmost importance in our scenario of sorting \texttt{int}s,
we also added a version of Timsort, called ``trotsort'', that uses our plain merge method instead
of galloping, but is otherwise identical to the library Timsort.

Our hard- and software setup is listed in
\wref{app:setup}.
We use the following inputs types:
\begin{itemize}
\item
	\emph{random permutations} are a case where no long runs are present to exploit;
\item
	\emph{``random-runs'' inputs} are constructed from a random permutation by sorting 
	segments of random lengths,
	where the lengths are chosen independently according to a geometric distribution with a 
	given mean $\ell$;
	since the geometric distribution has large variance, 
	these inputs tend to have runs whose sizes vary a lot;
\item
	\emph{``Timsort-drag'' inputs} are special instances of random-runs inputs 
	where the run lengths are chosen as $\mathcal R_{\mathrm{tim}}$,
	the bad-case example for Timsort from~\cite[Thm.\,3]{BussKnop2018}.
\end{itemize}

\subsection{Overhead of Nearly-Optimal Merge Order}

We first consider random permutations as inputs.
Since random permutations contain (with high probability) no long runs 
that can be exploited, the adaptive methods will not find anything 
that would compensate for their additional efforts to identify runs.
(This is confirmed by the fact that the total merge costs of all methods, 
including Timsort, are within 1.5\% of each other in this scenario.)
\wref{fig:times-normalized-rp} shows average running times
for inputs sizes from $100\,000$ to $100$ million ints.
(Measurements for $n=10\,000$ were too noisy to draw meaningful conclusions.)

\begin{figure}[tbh]
	\begin{minipage}[c]{0.6\textwidth}
		\hspace*{-.5em}\includegraphics[width=\textwidth]{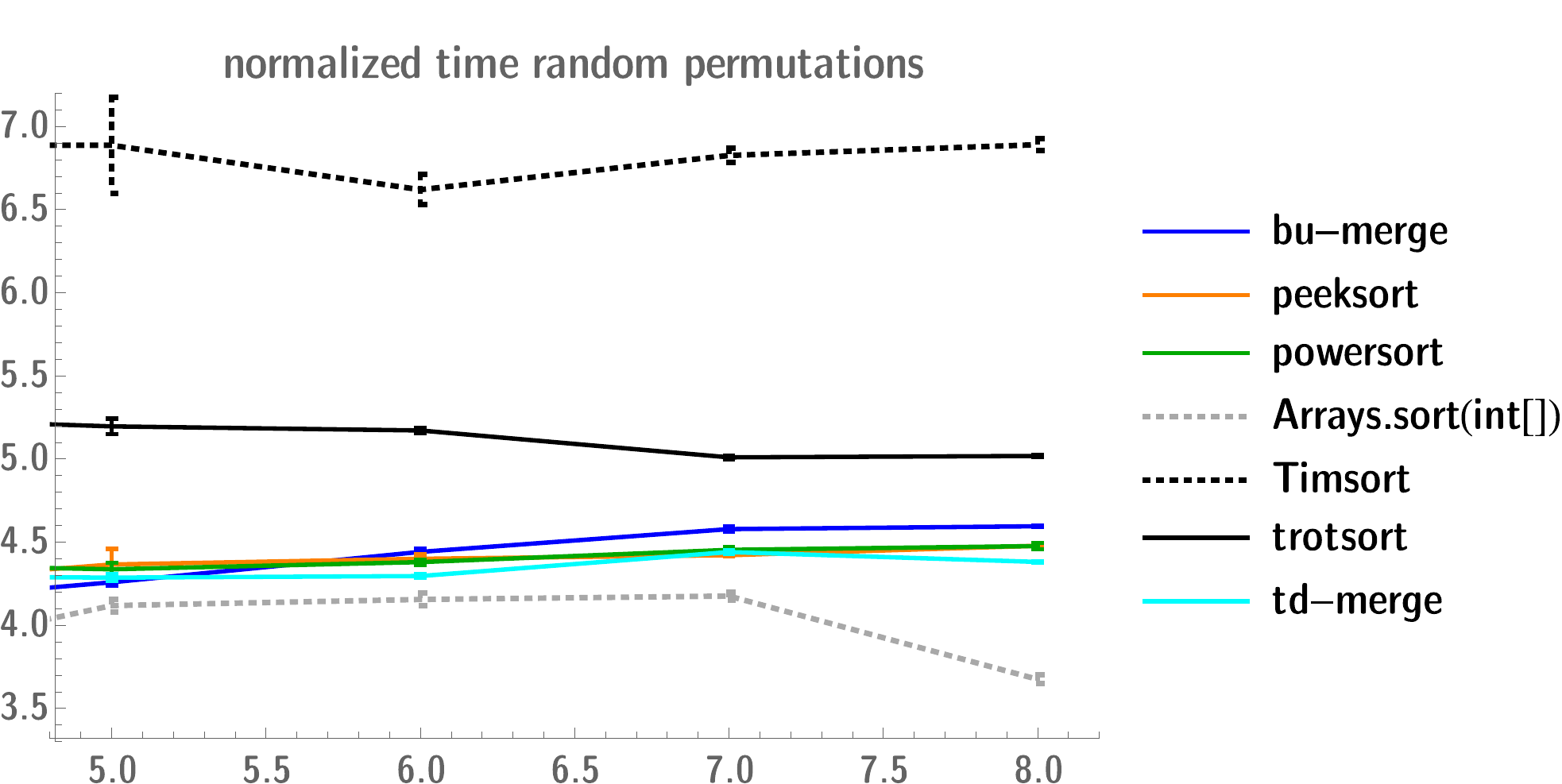}%
  	\end{minipage}\hfill
  	\begin{minipage}[c]{.38\textwidth}
		\caption{%
			Normalized running times on random permutations. 
			The $x$-axis shows $\log_{10}(n)$, the $y$-axis show average and standard deviation (as error bars)
			of $t/(n \lg n)$ where $t$ is the running time in ms.
			We plot averages over 1000 repetitions (resp.\ 200 and 20 for the largest sizes).
		}
	\label{fig:times-normalized-rp}
	\end{minipage}
\end{figure}

The relative ranking is clearly stable across different input sizes.
\texttt{Arrays.sort(int[])} (dual-pivot quicksort) is the fastest method,
but is not a stable sort and only serves as a baseline.
The timings of top-down and bottom-up mergesort, 
peeksort and powersort are 20--30\% slower than dual-pivot quicksort.
Comparing the four to each other, no large differences
are visible; if anything, bottom-up mergesort was a bit slower for large inputs.
Since the sorting cutoff / minimal run length $w=24$ exceeded the length of \emph{all} 
runs in all inputs, we are effectively presented with a case of all-equal run lengths.
Merging them blindly from left to right (as in bottom-up mergesort) then performs just fine,
and top-down mergesort finds a close-to-optimal merging order in this case.
That peek- and powersort perform essentially \emph{as good as} elementary mergesorts
on random permutations thus clearly indicates that their overhead 
for determining a nearly-optimal merging order is negligible.

The library Timsort performs surprisingly bad on \texttt{int[]}s,
probably due to the relatively cheap comparisons.
Replacing the galloping merge with the ordinary merge alleviates this
(see ``trotsort''), 
but Timsort remains inferior on random permutations by a fair margin
(10--20\%).

\subsection{Practical speedups by adaptivity}

	\begin{figure}[tbh]
		\plaincenter{
			\includegraphics[width=.8\linewidth]{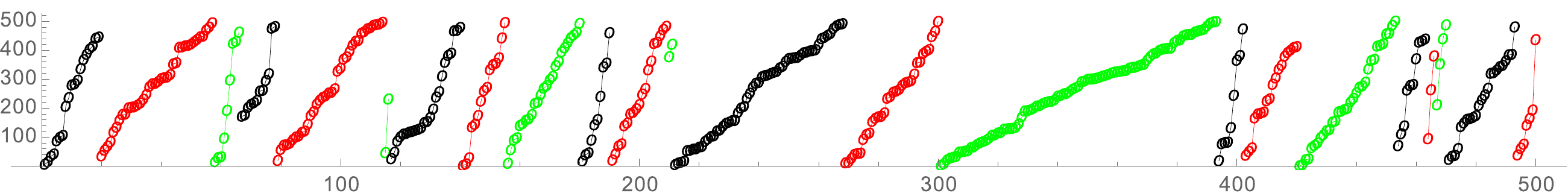}
		}
		\caption{%
			A random-runs input with $n=500$ and $\ell=20\approx\sqrt n$. The $x$-axis is the index in the array,
			the $y$-axis the value. Runs are emphasized by color.%
		}
		\label{fig:typical-random-runs}
	\end{figure}
After demonstrating that we do not lose much by using our adaptive methods
when there is nothing to adapt to,
we next investigate how much can be gained if there is.
We consider random-runs inputs as described above.
This input model instills a good dose of presortedness, but
not in a way that gives any of the tested methods an obvious
advantage or disadvantage over the others.
We choose a representative size of $n=10^7$ and an expected run length $\ell = 3\,000$,
so that we expect roughly $\sqrt n$ runs of length $\sqrt n$.

\begin{figure}[tbh]
	\plaincenter{%
			\includegraphics[height=10\baselineskip]{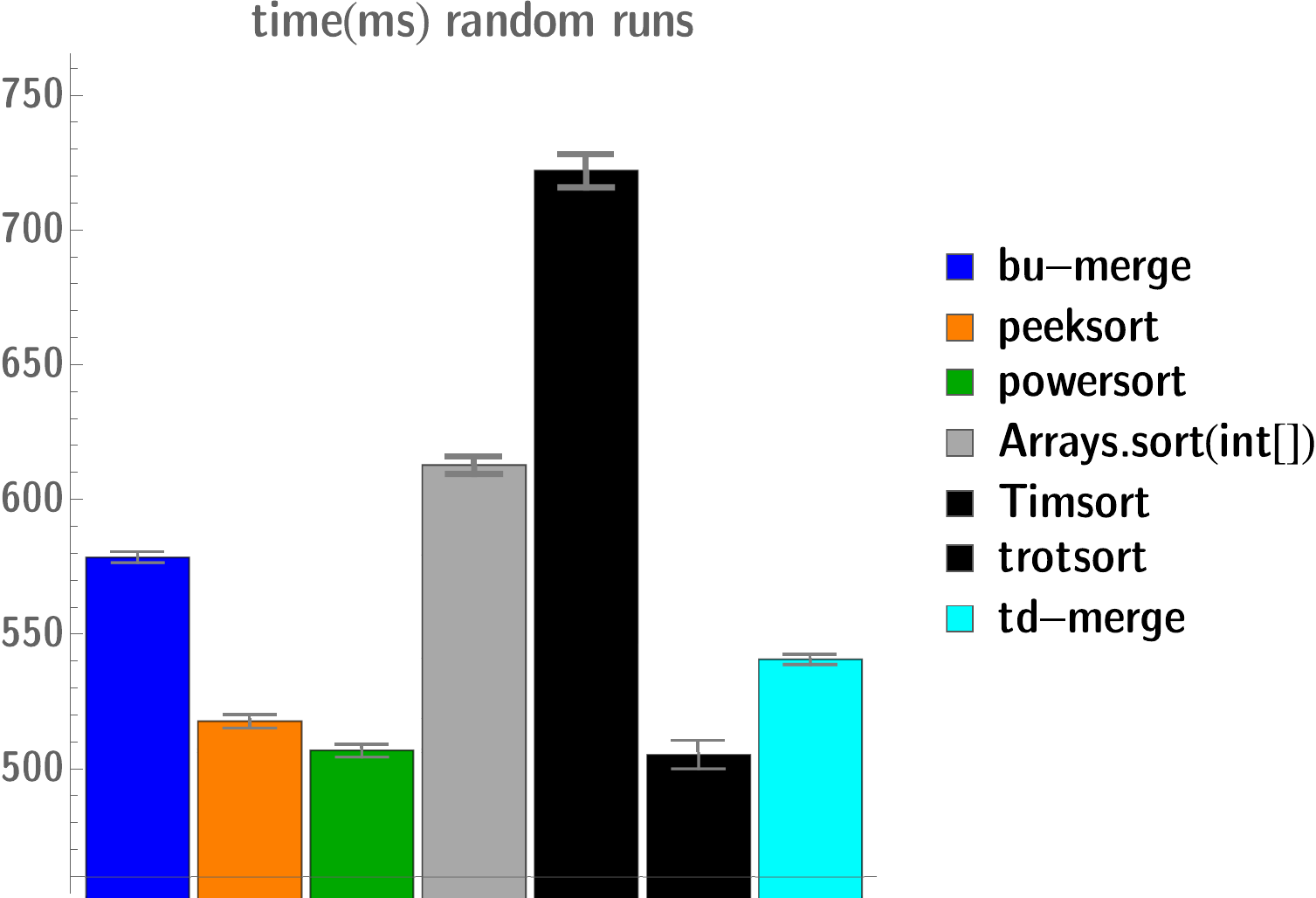}\qquad%
			\includegraphics[height=10\baselineskip]{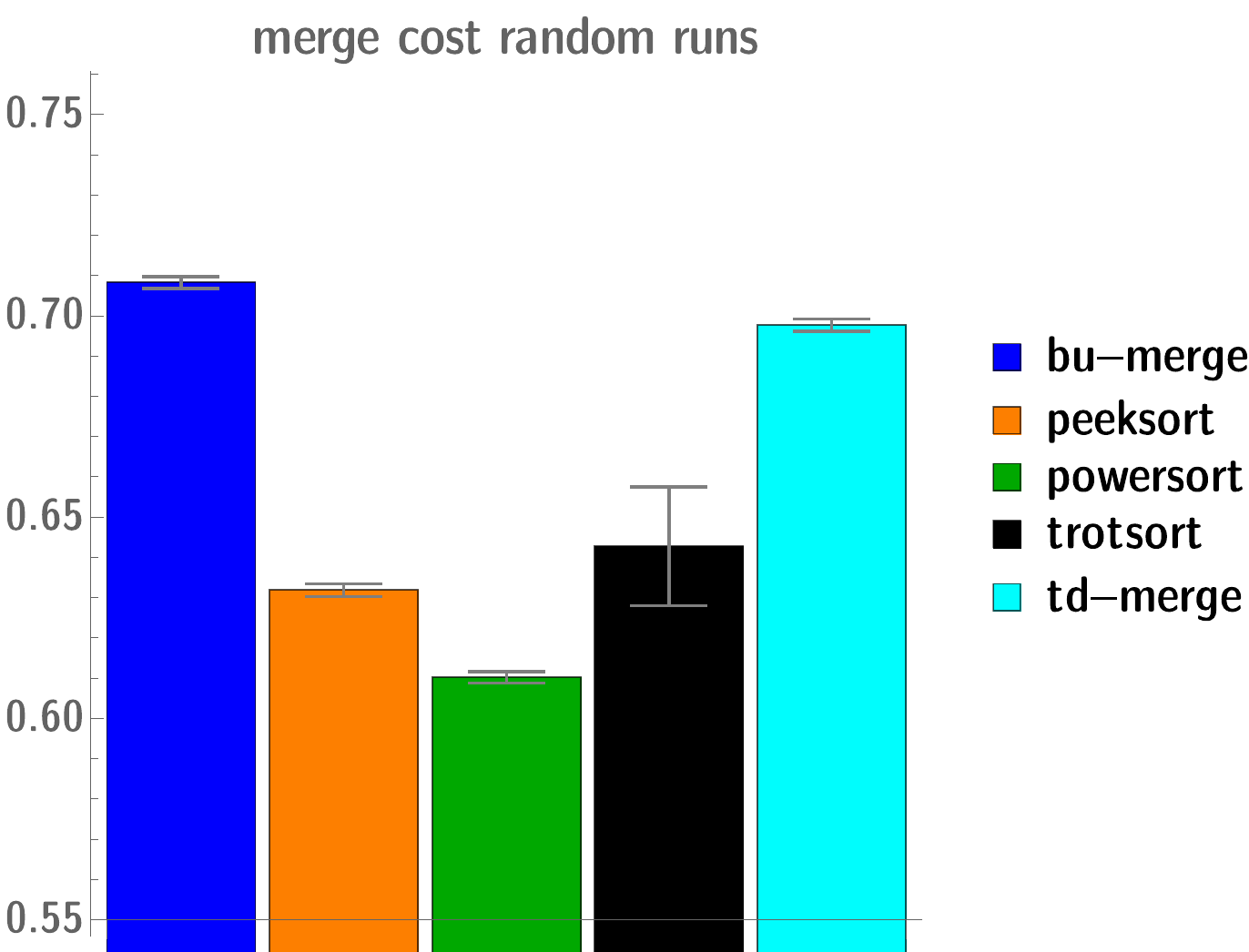}%
	}
	\vspace{-1ex}
	\caption{%
		Average running times (left) and normalized merge cost (right) on 
		random-runs inputs with $n=10^7$ and $\ell=3\,000\approx \sqrt n$.
		The merge costs have been divided by $n\lg(n/w)\approx1.87\cdot 10^8$, 
		which is the merge cost a (hypothetical) optimal mergesort that does 
		not pick up existing runs, but starts merging with runs of length $w=24$.
	}
\label{fig:times-mergecost-random-runs}
\end{figure}

If this was a random permutation, we would expect merge costs of 
roughly $n \lg(n/w) \approx 1.87 \cdot10^8$ (indeed a bit above this).
The right chart in \wref{fig:times-mergecost-random-runs} shows 
that the adaptive methods can bring the merge cost down to a little over 60\%
of this number.
	Note the run lengths vary considerably~-- 
	to give some intuitive feel for this volatility, \wref{fig:typical-random-runs} 
	shows a stereotypical (but smaller) random-runs input.
Powersort achieved average merge costs of 
$1.14\cdot10^8 < n \lg r \approx 1.17 \cdot 10^8$, \ie,
less than a method would that
only adapts to the \emph{number} of runs $r$.

In terms of running time, powersort is again among the fastest stable methods,
and indeed 20\% \emph{faster} than \texttt{Arrays.sort(int[])}.
The best adaptive methods are also 10\% faster than top-down mergesort.
(The latter is ``slightly adaptive'', 
by omitting merges if the runs happen to already be in order.)
This supports the statement that significant speedups can be realized
by adaptive sorting on inputs with existing order, and 
$\sqrt n$ runs suffice for that.
If we increase $\ell$ to $100\,000$, so that we expect only roughly $100$ long runs,
the library quicksort becomes twice as slow as powersort and Timsort (trotsort).

Timsort's running time is a bit anomalous again. 
Even though it occasionally incurs 10\% more merge costs on a given input than
powersort, the running times were within 1\% of each other 
(considering the trotsort variant;
the original galloping version was again uncompetitive).

\subsection{Non-optimality of Timsort}

Finally, we consider ``Timsort-drag'' inputs, a sequence
of run lengths $\mathcal R_{\mathrm{tim}}(n)$ 
specifically crafted by Buss and Knop~\cite{BussKnop2018}
to generate unbalanced merges (and hence large merge cost) in Timsort.
Since actual Timsort implementations employ minimal run lengths of up to $32$ elements
we multiplied the run lengths by $32$.
\wref{fig:times-mergecost-timdrag} shows running time and merge cost for all
methods on a characteristic Timsort-drag input of length $2^{24}\approx 1.6 \cdot 10^7$.

\begin{figure}[tbh]
	\plaincenter{%
			\includegraphics[height=10\baselineskip]{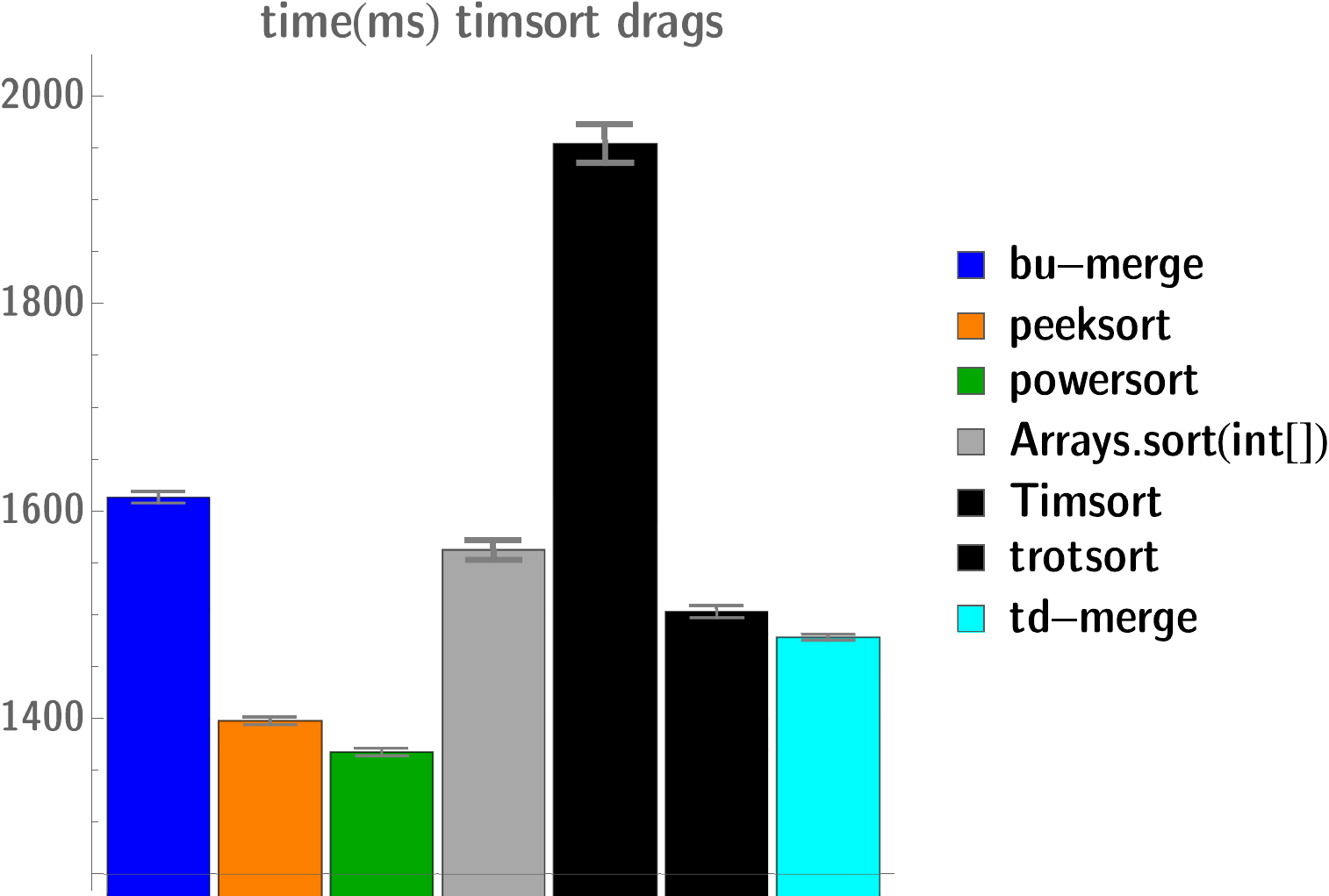}\qquad%
			\includegraphics[height=10\baselineskip]{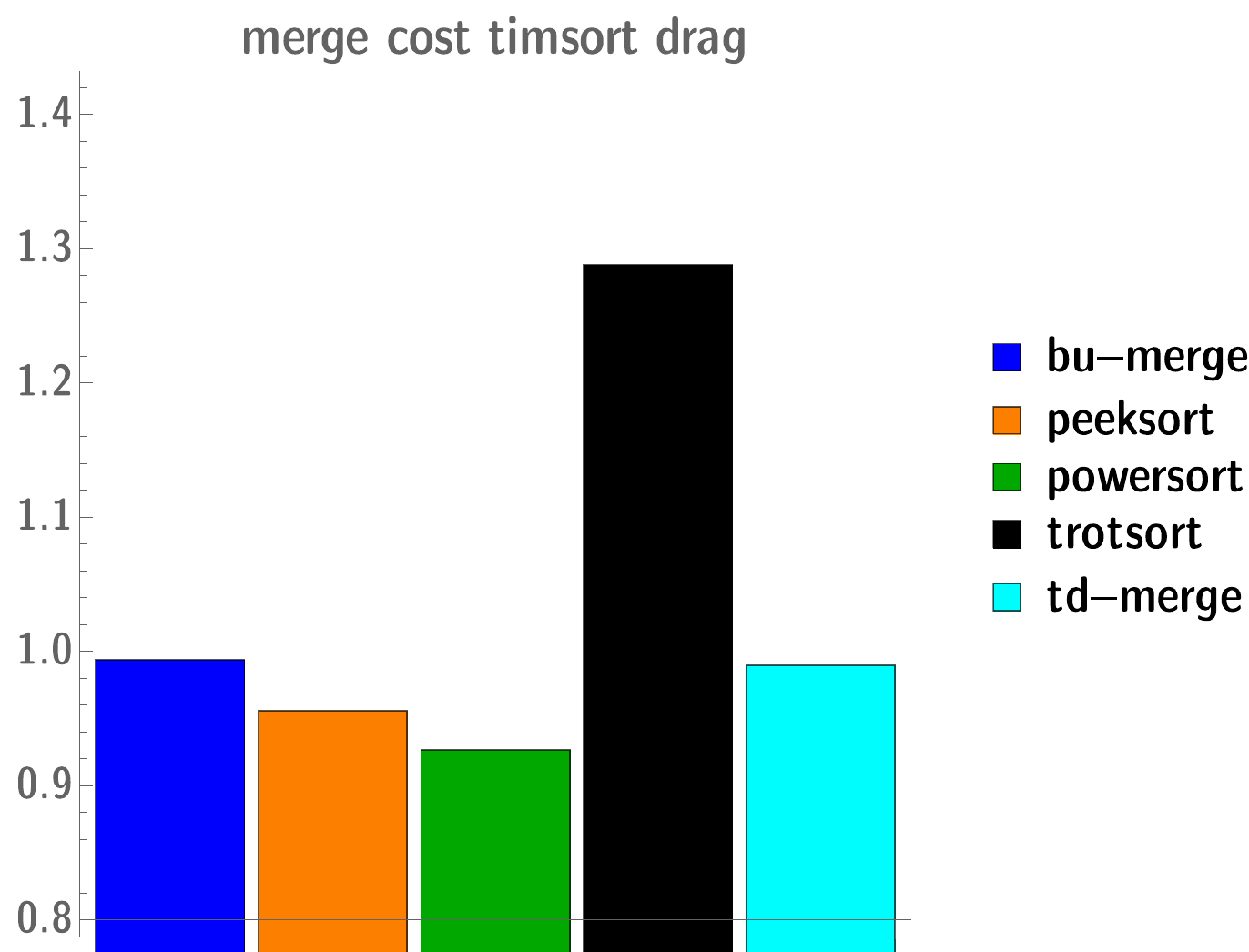}%
	}
	\vspace{-1ex}
	\caption{%
		Average running times (left) and normalized merge cost (right) on 
		``Timsort-drag'' inputs with $n=2^{24}$ and run lengths 
		$\mathcal R_{\mathrm{tim}}(2^{24}/32)$ multiplied by $32$.
		Merge costs have been divided by $n \lg(n/w) \approx 3.26\cdot10^8$.
	}
\label{fig:times-mergecost-timdrag}
\end{figure}

In terms of merge costs, Timsort now pays 30\% more than even a simple non-adaptive
mergesort, whereas peeksort and powersort obviously retain their proven nearly-optimal
behavior.
Also in terms of running time, Timsort is a bit slower than top-down mergesort,
and 10\% slower than powersort on these inputs.
It is remarkable that the dramatically larger merge cost does not lead to 
a similarly drastic slow down in practice.
Nevertheless, it must be noted that Timsort's merging strategy has weaknesses,
and it is unclear if more dramatic examples are yet to be found.

\section{Conclusion}
\label{sec:conclusion}

In this paper, we have demonstrated that provably good merging orders for natural mergesort 
can be found with negligible overhead.
The proposed algorithms, peeksort and powersort offer more reliable performance than
the widely used Timsort, and at the same time, are arguably simpler.

Powersort builds on a modified bisection heuristic for computing nearly-optimal binary search trees
that might be independent interest.
It has the same quality guarantees as Mehlhorn's original formulation, but
allows the tree to be built ``bottom-up'' as a Cartesian tree over a certain sequence, the ``node powers''.
It is the only such method for nearly-optimal search trees to our knowledge.

Buss and Knop conclude with the question
whether there exists a $k$-aware algorithm 
(a stack-based natural mergesort that only considers the top $k$ runs in the stack)
with merge cost $(1+o_r(1))$ times the optimal merge cost~\cite[Question~37]{BussKnop2018}.
Powersort effectively answers this question in the affirmative with $k=3$.%
\footnote{%
	Strictly speaking, powersort needs a relaxation of the model of Buss and Knop.
	They require decisions to be made solely based on the \emph{lengths} of runs, 
	whereas node power takes the location of the runs within the array into account.
	Since the location of a run must be stored anyway, this appears reasonable to us.
}

\subsection{Extensions and future work}

Timsort's ``galloping merge'' procedure saves comparisons when we consistently consume 
elements from one run, but in ``well-mixed'' merges, it does not help (much).
It would be interesting to compare this method with other comparison-efficient
merge methods.

Another line of future research is to explore ways to profit from duplicate keys in the input.
The ultimate goal would be a ``synergistic'' sorting method
(in the terminology of~\cite{BarbayOchoaSatti2017}) that has practically no overhead for detecting
existing runs and equals and yet exploits their \emph{combined} presence optimally.

\clearpage
\appendix

\section{Method~2${}'$ for nearly-optimal search trees}
\label{app:method-2-prime}

This section uses Mehlhorn's~\cite{Mehlhorn1977,Mehlhorn1984} original notation,
in particular, it handles the general optimal binary search tree setting.
We add the following two ``out-of-range'' cases to Mehlhorn's procedure 
$\mathit{construct\text-tree}(i,j,\mathit{cut},\ell)$, at the very beginning:

\begin{itemize}
\item (Case L) If $s_j < \mathit{cut} + 2^{-\ell}$, return
	$\mathit{construct\text-tree}(i,j,\mathit{cut},\ell+1)$
\item (Case R) If $\mathit{cut} + 2^{-\ell} < s_i$, return
	$\mathit{construct\text-tree}(i,j,\mathit{cut} + 2^{-\ell},\ell+1)$
\end{itemize}

The original procedure would end up in case C instead of L resp.\ 
case D instead of R and choose a right- resp.\ leftmost key as the
root node.
But if the desired cut point $\mathit{cut}+2^{-\ell}$ lies completely
outside the range of bisection $[s_i,s_j]$, this produces an unnecessarily
unbalanced split.
This case can only happen if the neighboring leaf has a probability
larger than $\frac12$ relative to the previous subtree, so that the current split point
still lies within the range corresponding to the previously chosen root.
Our cases L and R thus ``skip'' this void cut point $\mathit{cut}+2^{-\ell}$
and increment $\ell$ without generating a node.

Given that the invariants (1)--(4) are fulfilled for the current parameters
of \emph{construct-tree}, they will also be fulfilled in the recursive calls
issued in cases L and R.
Therefore, Mehlhorn's analysis remains valid for our modified procedure:
In his Fact~3, we have $b_h+1 \le \ell$ and $a_h \le \ell$ (instead of equality),
but this is all that is needed to establish Fact~4 and hence the bound on the search costs.

\subsection{Proof of \wref{lem:path-monotonicity}}

\textbf{\textsf{Claim:}} 	
In the tree constructed by Method~2\/${}'$,
the powers of internal nodes along any root-to-leaf path are strictly increasing.

\begin{proof}
Consider the recursive procedure \emph{construct-tree}
as described by Mehlhorn~\cite{Mehlhorn1977,Mehlhorn1984}, 
but with the additional cases from above.
We prove that whenever a recursive call 
$\textit{construct\text-tree}(i,j,\mathit{cut},\ell)$ creates
an internal node \internalnode k, we have $P_k = \ell$.
Since $\ell$ is incremented in all recursive calls, the claim follows.

Only cases A, B, C or D create new nodes, so assume we are not in case L or R.
Then we actually have the stronger version of invariant~(4):
$\mathit{cut} \le s_i\le \mathit{cut} + 2^{-\ell}\le s_j \le \mathit{cut}+2^{-\ell+1}$
and hence will always find a $k$ with $i<k\le j$ and 
\[\mathit{cut} \wrel\le s_{k-1} \wrel< \mathit{cut}+2^{-\ell} \wrel\le s_k \wrel\le \mathit{cut}+2^{-(\ell-1)}\]
for which we now create the node \internalnode k.
Dividing by $2^{-\ell}$ shows that $P_k = \ell$ (we have $a = s_{k-1}$ and $b = s_k$).
\end{proof}

\section{Java Code}
\label{app:java-code}
\lstset{
	tabsize=3,
	basewidth=.55em,
	language=Java,
	numbers=left,
	numberstyle=\tiny\sffamily,
	numbersep=-.8em,
	basicstyle=\footnotesize\ttfamily,
}

\bigskip\noindent
In this appendix, we give the core methods of a Java implementation of
\wref{alg:peeksort} and \wref{alg:powersort} that form the basis of our running-time
study.
The goal is to document (and comment on) a few design decisions and optimizations
that have been included in the running time study.
The full code, including the main routines to reproduce the running time studies
exactly as used for this paper,
is available on github~\cite{githubCode}.

\subsection{Merging}

All our mergesort variants (except for the original library implementation of Timsort)
use the following straight-forward ``bitonic'' merge procedure that
merges two adjacent runs.
It is taken from \cite[Program~8.2]{Sedgewick2003}

\begin{lstlisting}
	/** merges A[l..m-1] and A[m..r] */
	static void mergeRuns(int[] A, int l, int m, int r, int[] aux) {
		--m; // accounts for different convention in Sedgewick's book
		int i, j;
		for (i = m+1; i > l; --i) aux[i-1] = A[i-1];
		for (j = m; j < r; ++j) aux[r+m-j] = A[j+1];
		for (int k = l; k <= r; ++k)
			A[k] = aux[j] < aux[i] ? aux[j--] : aux[i++];
	}
\end{lstlisting}

Merging offers some potential for improvements in particular \wrt the number
of used key comparisons.
Since we operate on integers, comparisons are cheap and more sophisticated
merging strategies will not be needed here.

\subsection{Peeksort}

The pseudocode for Peeksort can be translated to Java almost verbatim.
Since the recursion is costly for small input sizes,
we switch to insertionsort when the subproblem size
is small. 
Values for \texttt{INSERTION\_SORT\_THRESHOLD} of $10$--$32$ yielded good 
results in the experiments; we ultimately set it to $24$ to approximate
the choice in the library implementation of Timsort.
(The latter chooses the minimal run length between $16$ and $32$ 
depending on $n$ by a heuristic that tries to avoid imbalanced merges.
We did not use this adaptive choice in our methods.)

\begin{lstlisting}[language=Java]
	static void peeksort(int[] A, int left, int right, 
	                     int leftRunEnd, int rightRunStart, int[] B) {
		if (leftRunEnd == right || rightRunStart == left) return;
		if (right - left + 1 <= INSERTION_SORT_THRESHOLD) {
			insertionsort(A, left, right, leftRunEnd - left + 1); return;
		}
		int mid = left + ((right - left) >> 1);
		if (mid <= leftRunEnd) {            // |XXXXXXXX|XX     X|
			peeksort(A, leftRunEnd+1, right, leftRunEnd+1,rightRunStart, B);
			mergeRuns(A, left, leftRunEnd+1, right, B);
		} else if (mid >= rightRunStart) {  // |XX     X|XXXXXXXX|
			peeksort(A, left, rightRunStart-1, leftRunEnd, rightRunStart-1, B);
			mergeRuns(A, left, rightRunStart, right, B);
		} else { // find middle run
			int i, j;
			if (A[mid] <= A[mid+1]) {
				i = extendIncreasingRunLeft(A, mid, leftRunEnd + 1);
				j = mid+1 == rightRunStart ? mid : 
					extendIncreasingRunRight(A, mid+1, rightRunStart - 1);
			} else {
				i = extendDecreasingRunLeft(A, mid, leftRunEnd + 1);
				j = mid+1 == rightRunStart ? mid : 
					extendStrictlyDecreasingRunRight(A, mid+1,rightRunStart - 1);
				reverseRange(A, i, j);
			}
			if (i == left && j == right) return;
			if (mid - i < j - mid) {         // |XX     x|xxxx   X|
				peeksort(A, left, i-1, leftRunEnd, i-1, B);
				peeksort(A, i, right, j, rightRunStart, B);
				mergeRuns(A,left, i, right, B);
			} else {                         // |XX   xxx|x      X|
				peeksort(A, left, j, leftRunEnd, i, B);
				peeksort(A, j+1, right, j+1, rightRunStart, B);
				mergeRuns(A,left, j+1, right, B);
			}
		}
	}
\end{lstlisting}

\subsection{Insertionsort to extend runs}

Here (and in the following code), we use a straight insertionsort
variant that accepts the length of a sorted prefix as an additional parameter.
A similar method is also used in Timsort to extend runs to a forced minimal
length.
The library Timsort uses binary insertionsort instead, 
but unless comparisons are expensive, a straight sequential-search variant is sufficient.

\begin{lstlisting}
	static void insertionsort(int[] A, int left, int right, int nPresorted) {
		assert right >= left;
		assert right - left + 1 >= nPresorted;
		for (int i = left + nPresorted; i <= right ; ++i) {
			int j = i - 1,  v = A[i];
			while (v < A[j]) {
				A[j+1] = A[j];
				--j;
				if (j < left) break;
			}
			A[j+1] = v;
		}
	}
\end{lstlisting}

\subsection{Powersort}

For powersort, we implement the stack as an array that is indexed by the node power.
Thereby, we avoid explicit stack operations and to store powers explicitly.
On the other hand, we have to check for empty entries since powers are not always consecutive.

\begin{lstlisting}[language=Java]
	static void powersort(int[] A, int left, int right) {
		int n = right - left + 1;
		int lgnPlus2 = log2(n) + 2;
		int[] leftRunStart = new int[lgnPlus2], leftRunEnd = new int[lgnPlus2];
		Arrays.fill(leftRunStart, -1);
		int top = 0;
		int[] buffer = new int[n >> 1];
	
		int startA = left, endA = extendRunRight(A, startA, right);
		int lenA = endA - startA + 1;
		if (lenA < minRunLen) { // extend to minRunLen
			endA = Math.min(right, startA + minRunLen-1);
			insertionsort(A, startA, endA, lenA);
		}
		while (endA < right) {
			int startB = endA + 1, endB = extendRunRight(A, startB, right);
			int lenB = endB - startB + 1;
			if (lenB < minRunLen) { // extend to minRunLen
				endB = Math.min(right, startB + minRunLen-1);
				insertionsort(A, startB, endB, lenB);
			}
			int k = nodePower(left, right, startA, startB, endB);
			assert k != top;
			for (int l = top; l > k; --l) { // clear left subtree bottom-up
				if (leftRunStart[l] == -1) continue;
				mergeRuns(A, leftRunStart[l], leftRunEnd[l]+1, endA, buffer);
				startA = leftRunStart[l];
				leftRunStart[l] = -1;
			}
			// store left half of merge between A and B on stack
			leftRunStart[k] = startA; leftRunEnd[k] = endA;
			top = k;
			startA = startB; endA = endB;
		}
		assert endA == right;
		for (int l = top; l > 0; --l) {
			if (leftRunStart[l] == -1) continue;
			mergeRuns(A, leftRunStart[l], leftRunEnd[l]+1, right, buffer);
		}
	}
\end{lstlisting}

The computation of the node powers can be done in many different ways and offers a lot
of potential for low-level bitwise optimizations; some care is needed to prevent overflows.
In our experiments, the following loop-less version was a tiny bit faster than
other tried alternatives.

\begin{lstlisting}
	static int nodePower(int left, int right, int startA, int startB, int endB) {
		int twoN = (right - left + 1) << 1; // 2*n
		long l = startA + startB - (left << 1);
		long r = startB + endB + 1 - (left << 1);
		int a = (int) ((l << 31) / twoN);
		int b = (int) ((r << 31) / twoN);
		return Integer.numberOfLeadingZeros(a ^ b);
	}
\end{lstlisting}

\section{Experimental Setup}
\label{app:setup}

All experiments were run on a Lenovo Thinkpad X230 Tablet
running Ubuntu 16.04.01 with Linux kernel 4.13.0-38-generic.
The CPU is an Intel Core i7-3520M CPU with 2.90GHz, the system has 8GB of main memory.

The Java compiler was from the Oracle Java JDK version 1.8.0\_161,
the JVM is Java HotSpot 64-Bit Server VM (build 25.161-b12, mixed mode).
All experiments were run with disabled X server from the TTY,
the java process was bound to one core (with the \texttt{taskset} utility).
They started with a warmup phase to trigger just-in-time (JIT) compilation
before measuring individual sorting operations.
The inputs were generated outside the timing window and reused the same array for all
repetitions.
The following flags were used for the JVM:
\texttt{-XX:+UnlockDiagnosticVMOptions}, \texttt{-XX:-TieredCompilation} and  \texttt{-XX:+PrintCompilation}.
Tiered compilation was disabled to avoid multiple passes of just-in-time compilation
to occur during the timed experiments; the print-compilation flag was used to
monitor whether relevant methods are subjected to recompilation or deoptimization
during the experiments.

\bibliography{mergesort.bib}

\end{document}